\documentclass[sigconf,edbt]{acmart-edbt2019}

\usepackage{booktabs} 

\usepackage{balance}  

\setcopyright{rightsretained}

\acmDOI{}

\acmISBN{XXX-X-XXXXX-XXX-X}


\settopmatter{printacmref=false, printccs=false, printfolios=false}

\pagestyle{empty} 

\usepackage{xcolor}
\usepackage{amsmath,amsfonts,amsthm}
\newcommand{\urlNewWindow}[1]{\href[pdfnewwindow=true]{#1}{\nolinkurl{#1}}}
\usepackage{mathtools}
\usepackage{hhline}
\usepackage{stmaryrd}
\usepackage{mathpartir}
\usepackage{geometry}
\usepackage{upgreek}
\usepackage{tikz}
\usepackage{booktabs} 
\usepackage{algorithm}
\usepackage[noend]{algpseudocode}
\algnewcommand{\LeftComment}[1]{\Statex \(\triangleright\) #1}

\algrenewcommand\algorithmicindent{0.5em}%

\usepackage{hyperref}
\usepackage{multirow}

\usepackage{listings}
\lstset{
  basicstyle=\ttfamily,
  columns=fullflexible,
}

\makeatletter
 \def\hlinewd#1{%
   \noalign{\ifnum0=`}\fi\hrule \@height #1 \futurelet
    \reserved@a\@xhline}
 \makeatother

 \makeatletter
 \def\algbackskip{\hskip-\ALG@thistlm}
 \makeatother

\usepackage{subfig}
\usepackage{graphicx}

 \usetikzlibrary{arrows,shapes,snakes,automata,backgrounds,petri}
 \usepackage{pgfplots}
 \pgfplotsset{compat=1.12}

\usetikzlibrary{positioning}

\widowpenalty = 10000
\clubpenalty = 10000




\usepackage{subfig}
\usepackage[utf8]{inputenc} 
\usepackage{xspace}
\usepackage{enumitem}
\usepackage{xcolor}

\begin{document}
\title{Approximate Evaluation of Label-Constrained \\ Reachability Queries}
%

\author{Stefania Dumbrava}
 \affiliation{%
   \institution{ENS Rennes \& CNRS IRISA \& INRIA}
 }
 \email{stefania.dumbrava@inria.fr}
 
\author{Angela Bonifati}
 \affiliation{%
   \institution{University of Lyon 1 \& CNRS LIRIS}
 }
 \email{angela.bonifati@univ-lyon1.fr}
 
\author{Amaia Nazabal Ruiz Diaz}
 \affiliation{%
   \institution{University of Lyon 1 \& CNRS LIRIS}
 }
 \email{amaia.nazabal-ruiz-diaz@univ-lyon1.fr}
 
\author{Romain Vuillemot}
 \affiliation{%
   \institution{Ecole Centrale Lyon \& CNRS LIRIS}
 }
 \email{romain.vuillemot@ec-lyon.fr}

\renewcommand{\shortauthors}{}

\begin{abstract}
The current surge of interest in graph-based data models mirrors the usage of increasingly complex reachability queries, 
as witnessed by recent analytical studies on real-world graph query logs. Despite the maturity of graph DBMS capabilities, 
complex label-constrained reachability queries, along with their corresponding aggregate versions, remain difficult to evaluate. 
In this paper, we focus on the approximate evaluation of counting label-constrained reachability queries. 
We offer a human-explainable solution to graph Approximate Query Processing (AQP). This consists of a summarization algorithm (GRASP),
as well as of a custom visualization plug-in, which allows users to explore the obtained summaries.
We prove that the problem of node group minimization, associated to the creation of GRASP summaries, is NP-complete. Nonetheless, our GRASP summaries 
are reasonably small in practice, even for large graph instances, and guarantee approximate graph query answering, paired with controllable error estimates.
We experimentally gauge the scalability and efficiency of our GRASP algorithm, and verify the accuracy and error estimation of the graph AQP module. 
To the best of our knowledge, ours is the first system capable of handling visualization-driven approximate graph analytics for 
complex label-constrained reachability queries.
\end{abstract}

\maketitle

\section{Introduction}\label{sec:intro}
A tremendous amount of information stored in graph-shaped format can be
inspected, by leveraging the already mature query capabilities of graph DBMSs
\cite{AnglesABBFGLPPS18, AnglesABHRV17, 2018Bonifati}.
However, arbitrarily complex graph reachability queries, entailing rather intricate and possibly recursive
graph patterns (required to extract friendship relationships, in social
networks, or protein-to-protein interactions, in biological
networks), prove difficult to evaluate, even on small-sized graph
datasets \cite{BaganBCFLA17, KhandelwalYY0S17, SzarnyasPAMPKEB18}. On the other hand, the usage of these queries has radically increased in
real-world graph query logs, as shown by recent empirical studies on Wikidata and DBPedia corpuses \cite{BonifatiMT17, Malyshev18}.
For example, the percentage of SPARQL property paths has grown from $15\%$ to $40\%$, for organic Wikidata queries, 
from 2017 to beginning 2018 \cite{Malyshev18}.

In this paper, we 
aim to offer \emph{human-explainable
approximate graph query processing}, by focusing on regular path
queries that identify graph paths, through regular expressions on edge labels.
Interactive data analytics, especially addressing the relational case,
has provided query result estimates with bounded
errors, relying on approximate query processing (AQP) techniques \cite{ChaudhuriDK17, PengZWP18}.
While many AQP systems have been proposed for diverse sets of SQL
aggregate queries, exploiting biased sampling (e.g., BlinkDB
\cite{AgarwalMPMMS13}) and online aggregation for the
mass of value distributions (e.g., G-OLA \cite{ZengADAS15}) or for rare populations 
(e.g., IDEA \cite{GalakatosCZBK17}),
AQP for graph databases has remained unexplored so far.

We target VAGQP (Visualization-driven Approximate Graph
Query Processing) for counting 
reachability queries that are label-constrained.
The exact evaluation of these queries is $\#P-$complete, following a result on the enumeration of
simple graph paths, due to Valiant \cite{Valiant79}. We rather focus on its \emph{approximation}
and show how it can be effectively used in practice. 

Sampling approaches, customarily used for relational AQP,
are not directly applicable to graph processing, due to the lack of the
linearity assumption \cite{IyerPVCASS18} \footnote{The linear
relationship between the sample size and execution time typical of relational query processing falls apart
in graph query processing.}. In view of this, we rely on a novel, query and workload-driven, graph
summarization technique that provides the baseline data structure for VAGQP.

While designing VAGQP, we devoted particular attention to facilitating the exploratory
steps carried out during the analytical process. We thus conceived an
exploratory visualizer, capable of showing the degree of approximation of
the graph summaries, for human understanding and result explanation.
Consequently, we offer alternative and more compact
visualizations of the envisioned graph summaries, using \emph{linked treemaps}.

Our techniques are seamlessly designed for \emph{property graphs}, attaching data
values to property lists on both nodes and edges and, thus, keeping pace with the
latest developments in graph databases and graph query languages \cite{Angles18, AnglesABBFGLPPS18}.

\emph{To the best of our knowledge, ours is the first work on approximate
graph analytics that addresses counting estimation on top of navigational graph queries}. 
Furthermore, we tackle both \emph{graph visualization} and \emph{programming APIs for graph query languages},
identified among the top 3 graph processing challenges, in a recent user survey \cite{SahuMSLO17}.
We first provide efficient in-DBMS translations of queries formulated on an initial graph to
corresponding \emph{approximate queries}, on top of the graph's GRASP summary. We then extensively rely
on exploratory data analysis to help users identify, in the linked treemap summary encoding, the regions
that are relevant to the approximate query evaluation process. 

We illustrate our problem through the running example below.
\begin{example}[Graph AQP for Social Network Advertising]\label{ref:ads}
Let $\mathcal{G}_{SN}$ (see Fig. \ref{ref:sn}) represent a social network, whose schema is inspired by the LDBC benchmark.
Entities are people (type \texttt{Person}, $P_i$) that \emph{know} ($l_0$) and/or \emph{follow} ({\color{blue}$l_1$})
either each other or certain forums (type \texttt{Forum}, $F_i$). These are \emph{moderated} ({\color{green}$l_2$}) by specific persons and
can \emph{contain} ({\color{purple}$l_3$}) messages/ads (type \texttt{Message}, $M_i$), to which persons can \emph{author} ({\color{orange}$l_4$})
other messages in \emph{reply} ({\color{red}$l_5$}).
$\mathcal{G}_{SN}$ exemplifies a graph instance adhering to the property graph model (PGM) that we will define in Section \ref{sec:prelim}.
Our goal is to \textbf{perform graph AQP to obtain high-accuracy, fast, query estimates}.
A practical application in this scenario would be leveraging AQP to obtain targeted advertising markers in social networks.
To make use of the heterogeneity of real-world networks, we need to express aggregate queries in a query language
allowing labeled constraints. This corresponds to a dialect of Regular Path Queries \cite{Wood2012,AnglesABHRV17,Angles18} and
suffices to express the aggregate RPQ-based query types illustrated below \footnote{For ease of exposition, their translation in a high-level syntax is reported in Figure \ref{ref:qrpq} in Section 2.}.

\noindent {\textbf{Simple and Optional Label.}} The count of node pairs that satisfy $Q_1$, i.e., $() {\color{red}\xrightarrow{l_5}} ()$, captures the number of ad \emph{reactions},
 while the corresponding count for $Q_2$, i.e., $() {\color{green}\xrightarrow{l_2?}} ()$ indicates the number of \emph{actual and potential moderators}.

\noindent {\textbf{Kleene Plus/Kleene Star.}} The number of the \emph{connected acquaintances}/\emph{potentially connected acquaintances} is the count of node pairs satisfying
 $Q_3$, i.e., $() \leftarrow{l^{+}_0} ()$, respectively, $Q_4$, i.e., $() \leftarrow{l^{*}_0} ()$

\noindent {\textbf{Disjunction.}}
The number of the \emph{targeted subscribers} is the sum of counting all
 node pairs satisfying $Q_5$, i.e., $() {\color{orange}\xleftarrow{l_4}} ()$ or $() {\color{blue}\xleftarrow{l_1}} ()$.

\noindent {\textbf{Conjunction.}}
 The \emph{direct reach} of a company via its page ads is the count of all node pairs satisfying $Q_6$, i.e., $() {\color{orange}\xleftarrow{l_4}} () {\color{red}\xrightarrow{l_5}} ()$.

\noindent {\textbf{Conjunction with Property Filters.}}
 Recommendation systems can then further refine the $Q_6$ estimates by also taking into account particular
 properties associated to nodes. This can be done by determining the \emph{direct demographic reach} targeting people within a certain age group, e.g., 18-24, 
 counting all node pairs that satisfy $Q_7$, i.e. $(x) {\color{orange}\xleftarrow{l_4}} () {\color{red}\xrightarrow{l_5}} ()$,
 where $x.age \geq 18$ and $x.age \leq 24$.
\begin{figure}[t!]
\centering
  \includegraphics[width=\columnwidth]{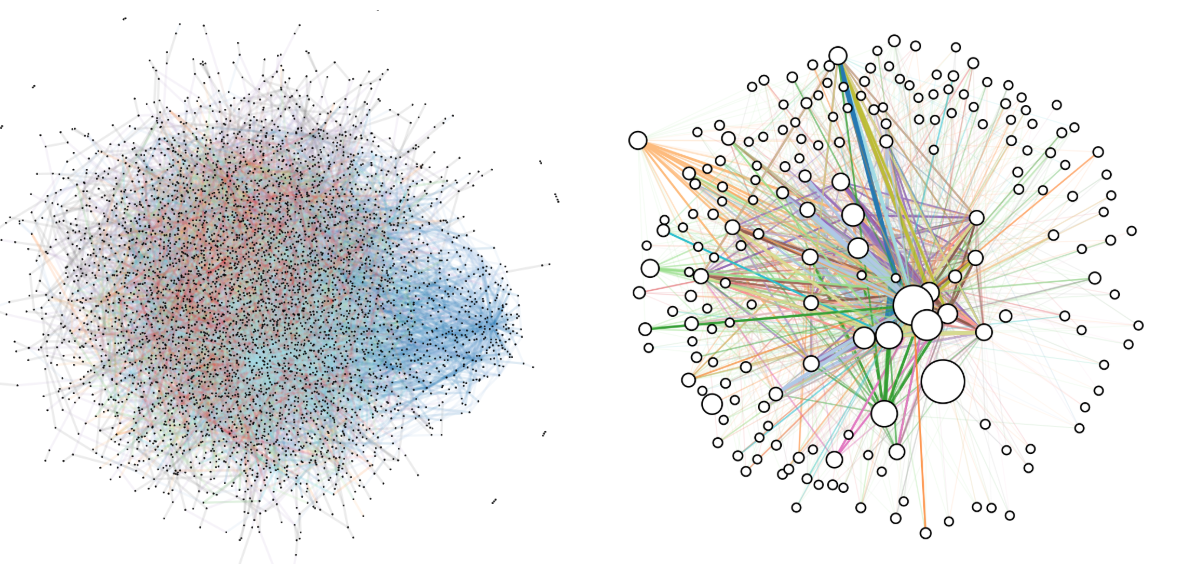}
  \caption{Original Graph (left) and Graph Summary (right) for 5K LDBC social network using
node link visualization.} 
    \label{fig:node-link-treemap}
\end{figure}

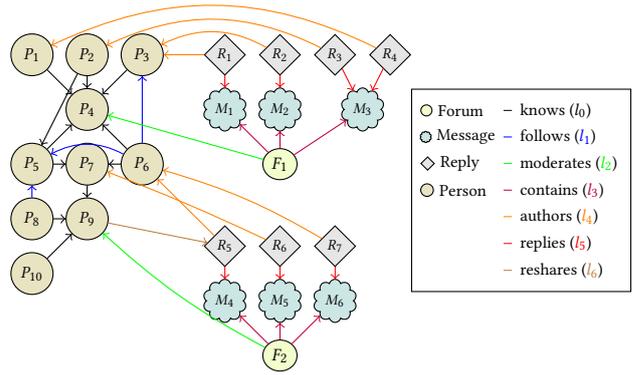
\begin{figure}[t!]
   \begin{tikzpicture}[scale=1.45, every node/.style={scale=0.7}]
  \tikzstyle{fe}=[ellipse, draw, thin,fill=lime!20, scale=0.7]
  \tikzstyle{mr}=[cloud, draw, thin,fill=teal!20, scale=0.6]
  \tikzstyle{rd}=[diamond, draw, thin,fill=gray!20, scale=0.6]
  \tikzstyle{mm}=[rectangle, maximum height=0cm, draw, thin]

  \tikzstyle{every state}=[fill=olive!20,draw=black,text=black, inner sep=0.8pt]
    \node[state]         (P1)  at (0,2.5)    {$P_1$};
    \node[state]         (P2)  at (0.5,2.5)  {$P_2$};
    \node[state]         (P3)  at (1,2.5)    {$P_3$};
    \node[state]         (P4)  at (0.5,2)    {$P_4$};
    \node[state]         (P5)  at (0,1.5)    {$P_5$};
    \node[state]         (P6)  at (1,1.5)    {$P_6$};
    \node[state]         (P7)  at (0.5,1.5)  {$P_7$};
    \node[state]         (P8)  at (0,1)      {$P_8$};
    \node[state]         (P9)  at (0.5,1)    {$P_9$};
    \node[state]         (P10) at (0,0.5)    {$P_{10}$};
    \node[rd]         (R1)  at (1.75,2.5)    {\scalebox{1.5}{$R_1$}};
    \node[rd]         (R2)  at (2.25,2.5)    {\scalebox{1.5}{$R_2$}};
    \node[rd]         (R3)  at (2.75,2.5)    {\scalebox{1.5}{$R_3$}};
    \node[rd]         (R4)  at (3.25,2.5)    {\scalebox{1.5}{$R_4$}};
    \node[mr]         (M1)  at (1.75,2)      {\scalebox{1.5}{$M_1$}};
    \node[mr]         (M2)  at (2.25,2)      {\scalebox{1.5}{$M_2$}};
    \node[mr]         (M3)  at (3,2)         {\scalebox{1.5}{$M_3$}};
    \node[fe]            (F1)  at (2.25,1.5)    {\scalebox{1.5}{$F_1$}};
    \node[fe]            (F2)  at (2.25,-0.25)  {\scalebox{1.5}{$F_2$}};
    \node[mr]         (M4)  at (1.75,0.25)   {\scalebox{1.5}{$M_4$}};
    \node[mr]         (M5)  at (2.25,0.25)   {\scalebox{1.5}{$M_5$}};
    \node[mr]         (M6)  at (2.75,0.25)   {\scalebox{1.5}{$M_6$}};
    \node[rd]         (R5)  at (1.75,0.75)   {\scalebox{1.5}{$R_5$}};
    \node[rd]         (R6)  at (2.25,0.75)   {\scalebox{1.5}{$R_6$}};
    \node[rd]         (R7)  at (2.75,0.75)   {\scalebox{1.5}{$R_7$}};
   \path (P1) edge[->]   node {}  (P4)
	 (P2) edge[->]   node {}  (P4)
	 (P3) edge[->]   node {}  (P4)
	 (P5) edge[->]   node {}  (P4)
	 (P6) edge[->]   node {}  (P4)
	 (P6) edge[->, color=blue]   node {}  (P3)
	 (P6) edge[->, bend right=28, above, color=blue]   node {}  (P5)
	 (P2) edge[->]   node {}  (P5)
	 (P5) edge[->]   node {}  (P7)
	 (P6) edge[->]   node {}  (P7)
	 (P7) edge[->]   node {}  (P9)
	 (P8) edge[->]   node {}  (P9)
	 (P8) edge[->, color=blue]   node {}  (P5)
	 (P10) edge[->]  node {}  (P9)
	 (R1) edge[->, color=red]   node {}  (M1)
	 (R2) edge[->, color=red]   node {}  (M2)
	 (R3) edge[->, color=red]   node {}  (M3)
	 (R4) edge[->, color=red]   node {}  (M3)
	 (F1) edge[->, color=purple]   node {}  (M1)
	 (F1) edge[->, color=purple]   node {}  (M2)
	 (F1) edge[->, color=purple]   node {}  (M3)
	 (F2) edge[->, color=purple]   node {}  (M4)
	 (F2) edge[->, color=purple]   node {}  (M5)
	 (F2) edge[->, color=purple]   node {}  (M6)
	 (R5) edge[->, color=red]   node {}  (M4)
	 (R6) edge[->, color=red]   node {}  (M5)
   	 (R7) edge[->, color=red]   node {}  (M6)
   	 (F1) edge[->, color=green]   node {}  (P4)
   	 (F2) edge[->, bend left=7, above, color=green]   node {}  (P9)
   	 (R1) edge[->, color=orange]   node {}  (P3)
   	 (R5) edge[->, color=orange]   node {}  (P6)
   	 (R6) edge[->, bend right=2, above, color=orange]   node {}  (P7)
   	 (R7) edge[->, bend right=7, above, color=orange]   node {}  (P6)
   	 (R2) edge[->, bend right=28, above, color=orange]   node {}  (P3)
   	 (R3) edge[->, bend right=26, above, color=orange]   node {}  (P2)
   	 (R4) edge[->, bend right=26, above, color=orange]   node {}  (P1)
   	 (P9) edge[->, color=brown]   node {}  (R5)
   ;
    \matrix(table) [draw,right, yshift=-5pt] at (current bounding box.east) {
      \node [fe,label=right:Forum] {};
      & \node [state,opacity=0,text opacity=1, label=right:knows ($l_0$),scale=0.3] {\par\noindent\rule{15pt}{1pt}}; \\
      \node [mr,label=right:Message] {};
      & \node [state,opacity=0,text opacity=1, label=right:follows ({\color{blue}$l_1$}),scale=0.3] {{\color{blue}\par\noindent\rule{15pt}{1pt}}};  \\
      \node [rd,label=right:Reply] {}; & \node [state,opacity=0,text opacity=1, label=right:moderates ({\color{green}$l_2$}),scale=0.3] {{\color{green}\par\noindent\rule{15pt}{1pt}}}; \\
      \node [state,label=right:Person,scale=0.3] {};
        & \node [state,opacity=0,text opacity=1, label=right:contains ({\color{purple}$l_3$}),scale=0.3] {{\color{purple}\par\noindent\rule{15pt}{1pt}}}; \\
        & \node [state,opacity=0,text opacity=1, label=right:authors ({\color{orange}$l_4$}),scale=0.3]  {{\color{orange}\par\noindent\rule{15pt}{1pt}}}; \\
        & \node [state,opacity=0,text opacity=1, label=right:replies ({\color{red}$l_5$}),scale=0.3]  {{\color{red}\par\noindent\rule{15pt}{1pt}}}; \\
        & \node [state,opacity=0,text opacity=1, label=right:reshares ({\color{brown}$l_6$}),scale=0.3]  {{\color{brown}\par\noindent\rule{15pt}{1pt}}}; \\
      };
     \end{tikzpicture}
     \vspace{0.5mm}
  \caption{Example Social Graph $\mathcal{G}_{SN}$}
  \label{ref:sn}
\end{figure}
\end{example}

\noindent {\bf Contributions.}
The paper tackles the following key challenges:

\noindent \underline{Query-oriented Graph Summarization}. The direct evaluation
of the above analytical queries on the original graph can be costly in terms
of runtime. To address this, we design the GRASP summarization algorithm, aimed
at preserving label-constrained reachability information and at book-keeping 
additional AQP-relevant statistics in the summary's node properties.
Unlike existing graph summarization methods for labeled graphs,
based on grouping, compression or influence \cite{LiuDSK16}, 
GRASP explicitly inspects the query workload and takes into account the labels
deemed as important, in order to provide a compression suitable for
graph analytical applications.
The produced graph summary, grouping nodes into supernodes (SN) and merging them in hypernodes (HN), is
guaranteed to be encoded as a property graph, similarly to the original
graph, thus allowing the evaluation of approximate queries directly in the graph
database. We additionally provide an alternative visual representation of the obtained graph summary, which
enables the user to understand its compactness and compression ratio (see Figure \ref{fig:node-link-treemap} right-hand-side 
for a grasp of the latter, on a 5K LDBC social network graph \footnote{More details about Figure 1
will be given in Section 3.3.}.) 

\noindent \underline{Blending Visual and Numerical Graph AQP}.
The node-link representation of the graph summary, as illustrated in Figure \ref{fig:node-link-treemap}, is not tailored for
VAGQP. In view of this, we introduce linked treemaps, a visualization overlay,
considering supernodes and hypernodes, along with their interconnecting
edges in the graph summary. Inspired by previous work on semantic
substrates for network exploration \cite{fekete03,ArisS07}, we rely on treemaps
as semantic overlays for multi-labeled graphs and
adapt them to our purposes.
We let the user locate
the densest/sparsest graph regions to be inspected by the VAGQP engine
on the linked treemaps. 
VAGQP is coupled with an automatic translation of the
queries, expressed in an RPQ dialect, into corresponding ones, in the same
dialect, on the graph summary, further leveraging node properties. 

\noindent \underline{Small Error Bounds and Experimental Analysis}.
First, we define a seamless translation, guaranteeing that queries on the summary and on the initial graph are expressible 
in the same fragment. Based on this, we experimentally illustrate the small relative errors of GRASP-based AQP, for each supported query type.

Our analysis also proves the effectiveness and efficiency of the GRASP algorithm on datasets with varying 
degrees of heterogeneity and sizes. We show that the error bounds for various query workloads, using disjunctions, single-label, Kleene-star, Kleene-plus, 
Optionality and Concatenations of different lengths (up to the sizes observed in practical studies), are relatively small and bearable. 
Apart from the high accuracy, we measure the relative response time of answering counting label-constrained reachability queries
on summaries, compared to that on the original graph. 
For a small query subset supported by both systems, we
compare with SumRDF \cite{Stefanoni:2018:ECC}, a baseline RDF graph summary, and show the better accuracy and
query runtime of our approach. 
Finally, we show the utility of linked treemaps in restricting approximate query evaluation to specific 
graph regions.

The paper is organized as follows. In Section \ref{sec:prelim}, we present the considered graph model and query language.
In Section \ref{sec:gsum}, we detail our proposed query-oriented, parametric, graph summarization algorithm (GRASP) along with 
its visual, linked treemap, counterpart. We explain the query translation needed for performing AQP on the summarized graph, in Section \ref{sec:qtrans}, and the
visualization techniques leveraged to identify summarization parameters, in Section \ref{sec:vaqp}. To gauge the performance of AQP on
GRASP-summaries over graphs with various characteristics, we conduct an
extensive experimental assessment in Section \ref{sec:exp}. We present related work in Section \ref{sec:relwork} and conclude in Section \ref{sec:concl}.

\section{Preliminaries}\label{sec:prelim}
\noindent {\bf Graph Model.} We take the \emph{property graph model} (PGM)~\cite{Angles18} as our underlying formalism (see Fig \ref{fig:pgm}).
Graph instances are thus multi-edge digraphs; its objects are represented by typed, data vertices and the relationships between these, as typed, 
labeled edges. Both vertices and edges can have any number of associated \emph{properties} (key/value pairs);
we denote the set of all properties as $\mathcal{P}$.
Let $\upgamma^{}_{V}$ and $\upgamma^{}_{E}$ be disjoint vertex and edge types, whose elements we denote as $L_V$, resp., $L_E$.
We call $\mathcal{G} = (V,E)$, where $E \subseteq V \times L_E \times V$, a \emph{graph instance}, with $V$ and $E$, disjoint 
sets of vertices $v$ and, respectively, of edge labels $l_e$. Vertices $v$ have a label id, $l_v$, of type $L_V$, and a set of 
property labels (attributes $l_i$), each with a certain, potentially undefined, respective term value $t_1, \ldots, t_n$. 
We denote the vertex and edge labeling function as $\gamma : E \cup V \rightarrow L_E \cup L_V$ and the 
schema function, as $\sigma : L_E \cup L_V \rightarrow \mathcal{P}$. 
We henceforth use a binary notation for edges 
and, given $e \in E$, $e = l(v_1,v_2)$, we abbreviate $l$ as $\gamma(e)$, $v_1$ as $e.1$ and $v_2$ as $e.2$. For a given edge label, $l$, we abbreviate 
its number of occurrences in $\mathcal{G}$ as $\#l$. For a label set, $\Lambda = \{l^{1}_e, \ldots, l^{n}_e\}$, its associated 
\emph{frequency list} is $\vec{\Lambda} = [l_1, \ldots, l_n]$, where $l_1 = l^{i_1}_e, \ldots, l_n = l^{i_n}_e$ and
$\{i_1, \ldots, i_n\}$ is a permutation of $\{1,\ldots,n\}$, such that $\#l^{i_1}_e \geq \ldots \geq \#l^{i_n}_e$.
For a graph $G = (V, E)$, its set of edge labels is $\Lambda(\mathcal{G}) = \{\gamma(e) ~|~ e \in E\}$ and,
for a $\mathcal{G}$-subgraph, $\mathcal{G}'= (V', E')$, its set of incoming/outgoing edge labels is
$\Lambda_{+}(\mathcal{G'}) = \{\gamma(e) ~|~ e \in E \wedge e.2 \in V' \wedge e.1 \notin V'\}$ and 
$\Lambda_{-}(\mathcal{G'}) = \{\gamma(e) ~|~ e \in E \wedge e.1 \in V' \wedge e.2 \notin V'\}$.

\begin{figure}[t!]
{\small 
\[
\arraycolsep=1.2pt\def\arraystretch{1}
\begin{array}{llll}
\hline
\text{Edges} ~ &  e & ~{\color{red}::=}~ & \epsilon ~|~ \lambda l_e. \; (p_1, t_1), \ldots (p_n, t_n), \text{for } l_e \in L_E \cup \mathcal{V},\\ 
&&& \gamma(e) = l_{e}, \text{ and } \sigma(l_{e}) = \{p_1, \ldots, p_n\} \subseteq \mathcal{P} \\ \hline
\text{Vertices} ~ & v & ~{\color{red}::=}~ & x \in \mathcal{V} ~|~ \lambda l_{v}. \; (p_1, t_1), \ldots (p_n, t_n), \text{for } l_v \in L_V \cup \mathcal{V}, \\ 
&&& \gamma(v) = l_{v}, \text{ and } \sigma(l_{v}) = \{p_1, \ldots, p_n\} \subseteq \mathcal{P} \\ \hline
\text{Terms}    ~ & t & ~{\color{red}::=}~ & \bot ~~|~~ c \in \mathcal{C} \\ \hline
\end{array}
\]
}
\caption{Property Graph Model}
\label{fig:pgm}
\end{figure}

\noindent {\bf Graph Query Language.} 
To query the above property graph model, we rely on a fragment of \emph{regular path queries} (RPQ) (\cite{CalvaneseDLV02}, \cite{ConsensM90}, \cite{CruzMW87}), which we enrich 
with aggregate operators, as depicted in Fig.~\ref{fig:graphg}. RPQs
correspond to property paths in SPARQL 1.1 and 
are a well-studied query class tailored to express \emph{label-constrained graph reachability patterns}, consisting of one or more
\emph{label-constrained reachability paths}. 
Given an alphabet $L_E$  of edge labels $l_e$ and vertices $v_1$ and $v_k$, the \emph{labeled path} $\pi$,
corresponding to $v_1 \xrightarrow{l^{1}_{e}} v_2 \ldots v_{k-1} \xrightarrow {l^{k}_{e}} v_k$, is the edge label concatenation $l^{1}_{e} \cdot \ldots \cdot l^{k}_{e}$.
In their full generality, RPQs allow to select vertices connected via such labeled paths that belong to a \emph{regular language} over $L_E$.
To our ends, we restrict RPQs to handle \emph{atomic paths} -- bi-directional, optional, single-labeled ($l_e$, $l_e?$, and $l_e^{-}$)
and transitive single-labeled ($l_e^{*}$) -- and \emph{composite paths} -- conjunctive and disjunctive 
composition of atomic paths ($l_e \cdot l_e$ and $\pi + \pi$). The expressivity of the identified fragment
is thus on par with that of Cypher.
While not as general as SPARQL, it retains relevance, since it, for example, already captures
more than 60\% of the property path SPARQL queries users write in practice \cite{BonifatiMT17}. 

\begin{figure}[t!]
{\small 
\[
\arraycolsep=1.2pt\def\arraystretch{1}
\begin{array}{clll}
\hline
\text{Clauses} ~ & C & ~{\color{red}::=}~ & A \leftarrow A_1, \ldots, A_n  ~|~ Q \leftarrow A_1, \ldots, A_n \\ \hline
\text{Queries}  ~ & Q & ~{\color{red}::=}~ & Ans(p_1,\ldots,p_m, count(x_1, \ldots, x_n)) \\ \hline
\text{Atoms}   ~ & A & ~{\color{red}::=}~ & \pi(l_{v_{1}}, l_{v_{2}}), \text{ for } l_{v_{1}}, l_{v_{2}} \in L_V ~|~ \leq(l_{v_{1}}, l_{v_{2}}) ~|~ \\ 
                  & & & <(l_{v_{1}}, l_{v_{2}}) ~|~ \geq(l_{v_{1}}, l_{v_{2}}) ~|~ >(l_{v_{1}}, l_{v_{2}}) \\ \hline
\text{Paths}  ~ & \pi & ~{\color{red}::=}~ & \epsilon ~|~ l_{e} ~|~ l_{e}? ~|~ l_{e}^{-1} ~|~ l_{e}^{*}, \text{ for } l_{e} \in L_E ~|~ 
                      \pi + \pi ~~|~~ l_e \cdot l_e \\
\hline
\end{array}
\]
}
\caption{Graph Query Language}
\label{fig:graphg}
\vspace{-1mm plus 1mm minus 1 mm}
\end{figure}

Moreover, this captures property path queries, as found in both the Wikidata online query collection
and the Wikidata large corpus studied in \cite{Malyshev18}. These are further enriched with the \emph{count} operator,
to support basic graph reachability estimates.

\begin{example}
We report in Figure \ref{ref:qrpq} the queries of Example \ref{ref:ads} expressed by using the syntax of Figure \ref{fig:graphg}.
\end{example}

\begin{figure}[t!]
\vspace{-2mm plus 2mm minus 2 mm}
\setlength\tabcolsep{2.5pt}
\begin{scriptsize}
\begin{tabular}{|c| l|} \hline
$Q_1(l_5)$ & $Ans(count(\_)) \leftarrow {\color{red}{l_5}}(\_,\_)$\\ \hline
$Q_2(l_2)$ & $Ans(count(\_))\leftarrow{\color{green}{l_2?}}(\_,\_)$\\ \hline
$Q_3(l_0)$ & $Ans(count(\_))\leftarrow l_0^{+}(\_,\_)$\\ \hline
$Q_4(l_0)$ & $Ans(count(\_))\leftarrow l_0^{*}(\_,\_)$\\ \hline
$Q_5(l_4,l_1)$ & $Ans(count(\_))\leftarrow{\color{orange}{l_4}} + {\color{blue}{l_1}}(\_,\_)$\\ \hline
$Q_6(l_4,l_5)$ & $Ans(count(\_))\leftarrow{\color{orange}l_4} \cdot {\color{red} l_5}(\_,\_)$\\ \hline
$Q_7(l_4,l_5)$ & $Ans(count(x))\leftarrow{\color{orange}l_4} \cdot {\color{red} l_5}, \geq(x.age,18), \leq(x.age,24)$.\\ \hline
\end{tabular}
\end{scriptsize}
\caption{Targeted Advertising Marker Queries}
\label{ref:qrpq}
\vspace{-3mm plus 2mm minus 0.1 mm}
\end{figure}

\section{Graph Summarization}\label{sec:gsum}
Let us assume a graph $\mathcal{G}=(V,E)$ and a edge label set $\Lambda_{Q} \subseteq \Lambda(\mathcal{G})$.
We introduce the \emph{GRASP summarization} algorithm, which compresses $\mathcal{G}$ into an AQP-amenable property graph, $\mathcal{\hat{G}}$,
tailored for counting label-constrained reachability queries (see Fig. \ref{fig:graphg}), whose labels are all in $\Lambda_{Q}$. 
Note that we place ourselves in the \emph{static} setting, as GRASP relies on transitive closure computation, whose maintenance under updates constitutes
an orthogonal problem to the one we target, as originally identified in \cite{PoutreL87}.

The GRASP summarization, described in Algorithm \ref{gsum}, consists of three phases. The \textbf{grouping phase} computes $\Phi$, 
a label-driven partitioning of $\mathcal{G}$ into \emph{groupings}, following the label-connectivity on the most frequently occurring edge 
labels in $\Lambda(\mathcal{G})$. Next, the \textbf{evaluation phase} refines the previous step, by further isolating into \emph{supernodes} 
the grouping components that satisfy a custom property concerning label-connectivity. The \textbf{merge phase} then fuses supernodes into 
\emph{hypernodes}, based on label-reachability similarity conditions, as specified by each heuristic mode $m$.

\begin{algorithm}[H]
\caption{GRASP($\mathcal{G}$, $\Lambda_{Q}$, $m$)}
\begin{flushleft}
 \textbf{Input:}  $\mathcal{G}$ -- a graph; $\Lambda_{Q} \subseteq \Lambda(\mathcal{G})$ -- a set of query labels;\\
                  \qquad \quad $\;m$ -- heuristic mode boolean switch \\
 \textbf{Output:} $\hat{\mathcal{G}}$ -- a graph summary
\end{flushleft}
\begin{algorithmic}[1]
\State $\Phi \leftarrow \text{GROUPING}(\mathcal{G})$
\vspace{1.5mm}
\State $\mathcal{G}^{*} \leftarrow \text{EVALUATION}(\Phi, \Lambda_{Q})$
\vspace{1.5mm}
\State $\hat{\mathcal{G}} \leftarrow \text{MERGE}(\mathcal{G}^{*}, \Lambda_{Q}, m)$
\vspace{1mm}
\State \Return $\hat{\mathcal{G}} = (\hat{V}, \hat{E})$
\end{algorithmic}
\label{gsum}
\end{algorithm}

The GRASP summarization phases are detailed in Sections \ref{sec:gphase}, \ref{sec:evalphase}, and \ref{sec:mergephase}.
To gauge the topological differences between realistic graph instances and their respective GRASP-summaries, 
we resort to visualization techniques in Section \ref{sec:vizg}.

\subsection{Grouping Phase}\label{sec:gphase}

The \textbf{grouping phase} aims to output a partitioning $\Phi$ of $\mathcal{G}$, such that $|\Phi|$ is \emph{minimized} and, 
for each $\mathcal{G}_i \in \Phi$, the number of occurrences of the most frequent edge label in $\Lambda(\mathcal{G}_i)$,
$\max\limits_{l \in \Lambda(\mathcal{G}_i)}(\#l)$, is \emph{maximized}. To this end, we first sort the 
set of edge labels in $\mathcal{G}$, $\Lambda(\mathcal{G})$, 
into a \emph{frequency list}, $\overrightarrow{\Lambda(\mathcal{G})}$. Next, for each $l_i \in \overrightarrow{\Lambda(\mathcal{G})}$, 
in descending frequency order, we set to identify the largest subgraphs of $\mathcal{G}$ that are 
weakly-connected on $l_i$. By relying on a \emph{most-frequently-occurring-label} heuristic, we thus bias the graph partitioning 
towards a coarser-level of granularity.

The key notion required to define $\Phi$ is that of \emph{maximal weak label-connectivity}, introduced below. 
Let us first introduce needed preliminary notions. In the following, we denote by $\overline{\mathcal{G}} = (V, \overline{E})$, 
where $|E| = |\overline{E}|$, the transformation of $\mathcal{G}$ into an undirected graph. Also, a 
graph $\mathcal{G}'$ is a \emph{subgraph} of $\mathcal{G}$ iff $V' \subseteq V$ and $E' \subseteq E$.

\begin{definition}[Weak Connectivity] $\mathcal{G}$ is \emph{weakly connected} iff $\overline{\mathcal{G}}$ is \emph{connected}, i.e.,
a path exists between any pair of $V$ vertices.
\end{definition}

\begin{definition}[Maximal Weak Connectivity] A subgraph of $\mathcal{G}$, $\mathcal{G}' = (V',E')$ is 
\emph{maximal weakly connected} iff: 1) $\mathcal{G}'$ is weakly connected and 2) no $E$ edge connects any $V'$ node to $V \setminus V'$.
\end{definition}

As we are interested in compressing $\mathcal{G}$ by label-connectivity, we strengthen the definition of weak-connectivity below.
%
\begin{definition}[Weak Label-Connectivity]
Given a graph $\mathcal{G} = (V, E)$ and a label $l \in \Lambda(\mathcal{G})$, $\mathcal{G}$
is \emph{weakly label-connected on $l$} iff: 
1) when converting all edges in $E$ into undirected ones, the resulting graph, $\overline{\mathcal{G}} = (V, \overline{E})$,
where $|E| = |\overline{E}|$, is \emph{connected} and
2) when removing any $l$-labeled edge from $\overline{E}$, there exist vertices $v_1, v_2$ in $V$ that are not connected
in $\overline{\mathcal{G}}$ by a $l^{+}$-labeled path.
\end{definition}

Since we aim to capture as many vertices in each \emph{weakly label-connected} subgraph of $\mathcal{G}$, we further enforce on the latter
the notion of \emph{maximality}, as defined next.

\begin{definition}[Maximal Weak Label-Connectivity]\label{def:mwlc}
Given a graph $\mathcal{G} = (V, E)$ and a label $l \in \Lambda(\mathcal{G})$, a subgraph of $\mathcal{G}$, $\mathcal{G'} = (V', E')$,
is \emph{maximal weakly label-connected} on $l$, denoted as $\lambda(\mathcal{G'}) = l$, iff: 1) $\mathcal{G'}$ is \emph{weakly label-connected on $l$} and
2) no edge in $E$, with label $l$, connects any of the nodes in $V'$ to $V \setminus V'$.
\end{definition}

We outline the \textbf{grouping phase}, as captured in Algorithm \ref{alg:group}.

\begin{algorithm}[H]
\begin{flushleft}
 \textbf{Input:}  $\mathcal{G}$ -- a graph\\
 \textbf{Output:} $\Phi$ -- a graph partitioning
\end{flushleft}
\caption{GROUPING($\mathcal{G}$)}
\begin{algorithmic}[1]
\LeftComment {Initialization}
\State $n \leftarrow |\Lambda(\mathcal{G})|$, $\overrightarrow{\Lambda(\mathcal{G})} \leftarrow [l_1, \ldots, l_{n}]$, $\Phi \leftarrow \emptyset$, $i \leftarrow 1$
\vspace{1.5mm}
\LeftComment {Label-driven partitioning computation}
\For {\textbf{all} $ l \in \overrightarrow{\Lambda(\mathcal{G})}$}
\State $\mathcal{G}_i \leftarrow \{\mathcal{G}^{k}_i = (V^{k}_i, E^{k}_i) \subseteq \mathcal{G} ~|~ \lambda(\mathcal{G}^{k}_i) = l\}$
\State $\Phi \leftarrow \Phi \cup \{\mathcal{G}_i\}$
\State $V \leftarrow V \setminus \bigcup\limits_{\mathcal{G}^{k}_i \in \mathcal{G}_i} \{v \in V ~|~ v \in V^{k}_i\}$
\State $i \leftarrow i+1$
\EndFor
\State $V(\Phi) \leftarrow \bigcup\limits_{\mathcal{G}^{\phantom{k}}_i \in \Phi} {\bigcup\limits_{\mathcal{G}^{k}_i \in \mathcal{G}_i} \{v \in V ~|~ v \in V^{k}_i\} }$
\State $\Phi \leftarrow \Phi \cup \{\mathcal{G}_i = (V_i, E_i) \subseteq \mathcal{G} ~|~ V_i = V \setminus V(\Phi) \}$
\State \Return $\Phi$
\end{algorithmic}
\label{alg:group}
\end{algorithm}

We henceforth denote $\Phi = \emph{GROUPING}(\mathcal{G})$ and name each $\mathcal{G}'$, $\mathcal{G}' \in \Phi$,
a $\mathcal{G}$-\emph{grouping} and each $\mathcal{G}''$, $\mathcal{G}'' \in \mathcal{G}'$, a $\mathcal{G}'$-\emph{subgrouping}. 
Note that $\Phi$ is not unique, as, for $l_1, l_2 \in \Lambda(\mathcal{G})$, such that $\#l_1 = \#l_2$, we arbitrarily order $l_1$ and 
$l_2$ in $\overrightarrow{\Lambda(\mathcal{G})}$. 

\begin{definition}[Non-Trivial (Sub)Groupings]
A $\mathcal{G}$-grouping, $\mathcal{G}'$, $\mathcal{G}' = (V', E')$, is called \emph{trivial}, if $\mathcal{G}' = \mathcal{G}$
or $E' = \emptyset$, and \emph{non-trivial}, otherwise. A $\mathcal{G}'$-subgrouping, $\mathcal{G}'' = (V'', E'')$, is called \emph{trivial}, if 
$E'' = \emptyset$, and \emph{non-trivial}, otherwise.
\end{definition}

\begin{lemma}[Non-Trivial Grouping Properties]
Let $\mathcal{G}'$ be a non-trivial $\mathcal{G}$-grouping. The following hold. 
\noindent{\textbf{P1:} For any non-trivial $\mathcal{G'}$-subgrouping, $\mathcal{G}''$, there exists $l''$, 
$l'' \in \Lambda(\mathcal{G'})$, such that $\lambda(\mathcal{G'}) = l''$}.
\noindent{\textbf{P2:}} For any non-trivial distinct $\mathcal{G'}$-subgroupings, $\mathcal{G}''_1$, $\mathcal{G}''_2$:
 a) $\lambda(\mathcal{G}''_1) = \lambda(\mathcal{G}''_2)$ and 
b) $\mathcal{G}''_1$ and $\mathcal{G}''_2$ are edge-wise disjoint.
\end{lemma}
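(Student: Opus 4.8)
The plan is to prove both properties by tracing the construction in Algorithm~\ref{alg:group} and reading off the invariants of the main loop. First I would observe that a non-trivial $\mathcal{G}$-grouping $\mathcal{G}'$ is produced at a single, well-defined iteration $i$ of the \textbf{for}-loop, namely the one that processes the frequency-list label $l_i$; the only other grouping added to $\Phi$ is the terminal ``leftover'' $\mathcal{G}_i=(V_i,E_i)$ on the vertices $V\setminus V(\Phi)$, which I would treat separately (its elements are either trivial or decompose by the same component argument). For a loop-produced grouping, the loop body sets $\mathcal{G}'=\mathcal{G}_i=\{\mathcal{G}^k_i\mid \lambda(\mathcal{G}^k_i)=l_i\}$, so \emph{by construction} every subgrouping $\mathcal{G}''\in\mathcal{G}'$ satisfies $\lambda(\mathcal{G}'')=l_i$. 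This single observation already discharges the existence part of \textbf{P1} (take $l''=l_i$) and all of \textbf{P2}a, since any two subgroupings of $\mathcal{G}'$ were manufactured at the same iteration and therefore carry the same associated label $l_i$.

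The remaining content of \textbf{P1} is the membership claim $l''\in\Lambda(\mathcal{G}')$, i.e. that a non-trivial subgrouping genuinely contains an $l_i$-labeled edge. I would reduce this to showing $l_i\in\Lambda(\mathcal{G}'')$, whence $l_i\in\Lambda(\mathcal{G}'')\subseteq\Lambda(\mathcal{G}')$. Unfolding Weak Label-Connectivity, a non-trivial $\mathcal{G}''$ has $E''\neq\emptyset$ and at least two vertices, and condition~(2) forces every $l_i$-edge to be essential for $l_i^{+}$-reachability inside $\overline{\mathcal{G}''}$; the intended reading is that the $l_i$-edges form the connecting backbone of $\mathcal{G}''$, so at least one must be present. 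This is the step I expect to be the main obstacle: taken literally, condition~(2) is vacuously satisfiable by a connected, $l_i$-free chunk, so the argument cannot rest on the definition alone and must invoke the construction --- such $l_i$-free chunks are never emitted as non-trivial $l_i$-subgroupings at iteration $i$ but instead survive the vertex deletion on line~5 and fall through to a later label or to the terminal grouping. Pinning down this compatibility between the formal definition and the component the algorithm actually extracts is the crux; once it is settled, $l_i\in\Lambda(\mathcal{G}'')$ is immediate.

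Finally, for \textbf{P2}b I would deduce edge-disjointness from vertex-disjointness. I would argue that two \emph{distinct} subgroupings $\mathcal{G}''_1,\mathcal{G}''_2$ of $\mathcal{G}'$ have $V''_1\cap V''_2=\emptyset$: were they to share a vertex, their union would again be weakly label-connected on $l_i$, contradicting the maximality (clause~2 of Maximal Weak Label-Connectivity) of each. Establishing that the union of two overlapping $l_i$-weakly-label-connected subgraphs is again weakly label-connected on $l_i$ is the one monotonicity fact that needs care, because connectivity (clause~1) is measured over \emph{all} edge labels while maximality restricts only $l_i$-edges, so I must check that sharing a vertex cannot create an $l_i$-cycle that would break the ``every $l_i$-edge essential'' condition --- maximality together with the backbone structure from \textbf{P1} rules this out. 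Given vertex-disjointness, every edge of $\mathcal{G}''_1$ has both endpoints in $V''_1$ and every edge of $\mathcal{G}''_2$ both endpoints in $V''_2$; since these vertex sets are disjoint, no edge can belong to both, which is exactly edge-wise disjointness.
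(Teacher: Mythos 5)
Your proof is correct and takes essentially the same route as the paper's: P1 and P2.a are read off the construction (every subgrouping produced at iteration $i$ carries the label $l_i$), and P2.b follows because distinct maximal weakly label-connected components cannot overlap, hence cannot share an edge. If anything, your write-up is more careful than the paper's three-line argument, which asserts these facts ``by construction'' and ``by contradiction'' without addressing the vacuous-satisfiability of the label-connectivity condition or the union-monotonicity step that you explicitly flag.
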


\begin{proof}
$\textbf{P1}$ is provable by contradiction. If $\nexists l''$, $l'' \in \Lambda(\mathcal{G'})$, 
such that $\lambda(\mathcal{G'}) = l''$, then $E' = \emptyset$, contradicting the non-triviality of $\mathcal{G'}$.
\textbf{P2.a)} holds by construction and \textbf{P2.b)}, by contradiction. Assume 
$\mathcal{G}''_1 \cap \mathcal{G}''_2 \neq \emptyset$; then, $\mathcal{G}''_1$ and $\mathcal{G}''_2$
share at least a node, which is impossible by construction.  
\end{proof}

Next, let us characterize the \emph{GROUPING} algorithm. We start with the following observations.
%

\begin{lemma}[Subgrouping Maximal Label-Connectivity]\label{lemma:smlc}
For each $\mathcal{G}_i$, $\mathcal{G}_i \in \Phi$, each of its \emph{maximally weakly connected} components, $\mathcal{G}^{k}_i$, 
$\mathcal{G}^{k}_i \in \mathcal{G}_i$ is also \emph{maximally label-connected} on $l$, where $\#l = \max\limits_{l \in \Lambda(\mathcal{G}_i)}(\#l)$.
\end{lemma}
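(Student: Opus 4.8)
The plan is to unfold the GROUPING construction and reduce the statement to two independent facts about the grouping $\mathcal{G}_i$ produced at the $i$-th iteration on the frequency-list label $l_i$: (a) that the subgraphs $\mathcal{G}^{k}_i$ assembled in line~3 of Algorithm~\ref{alg:group} are exactly the maximal weakly connected components of the union graph $\mathcal{G}_i$, and (b) that $l_i$ is the most frequent label occurring in $\mathcal{G}_i$, i.e.\ $\#l_i = \max_{l \in \Lambda(\mathcal{G}_i)}(\#l)$. Since line~3 already imposes $\lambda(\mathcal{G}^{k}_i) = l_i$, so that each $\mathcal{G}^{k}_i$ is maximal weakly label-connected on $l_i$ by Definition~\ref{def:mwlc}, combining (a) and (b) yields the claim with $l = l_i$.

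For (a), I would argue that connectivity plus vertex-disjointness pins down the components. Each $\mathcal{G}^{k}_i$ is weakly connected by condition~1 of Weak Label-Connectivity. By part \textbf{P2} of the Non-Trivial Grouping Properties lemma, distinct non-trivial subgroupings of $\mathcal{G}_i$ share no vertex (and are edge-disjoint), so in the union graph $\mathcal{G}_i$ no edge joins two different $\mathcal{G}^{k}_i$. Hence each $\mathcal{G}^{k}_i$ is a maximal weakly connected subgraph of $\mathcal{G}_i$, i.e.\ precisely a maximal weakly connected component; the trivial (edgeless) subgroupings contribute only isolated vertices and can be treated separately.

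For (b), I would first show that the residual graph processed at iteration $i$ contains no edge whose label is one of $l_1, \ldots, l_{i-1}$. This follows by induction on $j < i$: at iteration $j$ the algorithm collects the full family of maximal weakly label-connected subgraphs on $l_j$ and deletes all their vertices, and condition~2 of Definition~\ref{def:mwlc} forbids any $l_j$-edge from crossing the boundary of this deleted region. Consequently every occurrence of $l_j$ is removed, so $\Lambda(\mathcal{G}_i) \subseteq \{l_i, l_{i+1}, \ldots, l_n\}$; since $\overrightarrow{\Lambda(\mathcal{G})}$ is sorted in descending frequency, every such label has frequency at most $\#l_i$, and $l_i$ itself appears in $\mathcal{G}_i$ because each non-trivial $\mathcal{G}^{k}_i$ is weakly label-connected on $l_i$ and hence carries $l_i$-edges. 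This gives $\#l_i = \max_{l \in \Lambda(\mathcal{G}_i)}(\#l)$.

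The main obstacle is the coverage step inside (b): justifying that deleting the vertices of the maximal weakly-label-connected-on-$l_j$ subgraphs removes \emph{every} $l_j$-labeled edge. One must rule out $l_j$-edges sitting in configurations (e.g.\ $l_j$-cycles) that may fail condition~2 of Weak Label-Connectivity and are therefore not themselves grabbed as maximal weakly label-connected subgraphs. I expect this to require a careful argument that each $l_j$-edge nonetheless has both endpoints inside some grabbed component---leveraging connectivity (condition~1) together with the boundary-crossing prohibition of Definition~\ref{def:mwlc}---so that its endpoints are deleted even when the edge is not the backbone of the component. Pinning down this exhaustiveness of the grabbing is the crux; the remaining bookkeeping (disjointness, descending order, presence of $l_i$) is routine given the earlier lemmas.
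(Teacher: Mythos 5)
There is a genuine gap, and you flag it yourself: the load-bearing step of your part (b) --- that after iteration $j$ of Algorithm \ref{alg:group} every $l_j$-labeled edge has been eliminated from the residual graph --- is exactly where you stop and write that you ``expect this to require a careful argument.'' A proof cannot defer its crux. The worry you raise is real: by condition 2) of weak label-connectivity, a subgraph whose $l_j$-edges form a cycle is \emph{not} weakly label-connected on $l_j$ (deleting one cycle edge leaves every vertex pair $l_j^{+}$-connected), so line 3 of Algorithm \ref{alg:group} does not simply collect the weak $l_j$-connected components, and it is not immediate that every endpoint of an $l_j$-edge lands in some collected subgraph. The gap is closable under the definitions as written: given an $l_j$-edge $e$, let $V_C$ be the vertex set of the weak $l_j$-connected component containing $e$, and let $T$ be a spanning tree of the $l_j$-edges on $V_C$; then $(V_C, T)$ satisfies condition 1) (it is connected), condition 2) (every tree edge is a bridge, so its removal $l_j$-disconnects some pair), and the boundary requirement of Definition \ref{def:mwlc} (no $l_j$-edge exits an $l_j$-connected component), hence $(V_C, T)$ is collected and both endpoints of $e$ are deleted. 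Your proposal needs this (or an equivalent) construction spelled out; without it the invariant $\Lambda(\mathcal{G}_i) \subseteq \{l_i, \ldots, l_n\}$, and with it the whole lemma, is unsupported.

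Note also that your route is structurally different from the paper's, which never needs this exhaustiveness property. The paper argues by contradiction inside a single component: if $\lambda(\mathcal{G}^{k}_i) = l' \neq l$, then $l$-labeled edges occur in $E^{k}_i$, each such edge joins vertices that are also joined by an $l'$-labeled edge, and since $\#l \geq \#l'$ some vertex pair in $V^{k}_i$ is joined by more $l$-edges than $l'$-edges, forcing $\lambda(\mathcal{G}^{k}_i) = l$ --- a contradiction. That local argument (itself rather terse) requires no induction over iterations and no residual-graph invariant. Your global argument, once the coverage step is supplied, is arguably more informative about GROUPING, since it establishes that the labels $l_1, \ldots, l_{i-1}$ are wholly consumed before iteration $i$; but as submitted it is an outline with a hole at its decisive point rather than a proof.
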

\begin{proof}
By construction, we know that, if $\mathcal{G}^{k}_i \in \mathcal{G}_i$, then there exists a label $l'$, $l' \in \Lambda(\mathcal{G})$, such 
that $\lambda(\mathcal{G}^{k}_i) = l'$. Assume that $l' \neq l$. By definition, there exists at least one edge in $E^{k}_i$ labeled $l$.
Since $\mathcal{G}^{k}_i$ is maximally weakly label-connected on $l'$, then each such edge has to connect vertices that are also connected
by an edge labeled $l'$. As $\#l \geq \#l'$, then there exists at least one pair of vertices in ${V}^{k}_i$ that are connected by
more edges labeled $l$ than $l'$. Hence, $\lambda(\mathcal{G}^{k}_i) = l$, contradicting the hypothesis.
\end{proof}

\begin{theorem}[\emph{GROUPING} Properties]\label{thm:grp}
If $|V| \geq 1$, then:\\
\noindent{\textbf{P1:} $\forall \mathcal{G}_{i} \in \Phi, V_i \neq \emptyset$}\\
\noindent{\textbf{P2:} $\forall \mathcal{G}_{i}, \mathcal{G}_{j} \in \Phi$, where $i \neq j$, $ \mathcal{G}_{i} \cap \mathcal{G}_{j} = \emptyset$}\\
\noindent{\textbf{P3:} $\bigcup\limits_{i \in [1,k]} V_{i} = V$ and $\bigcup\limits_{i \in [1,k]} E_{i} \subseteq E$}\\
\noindent{\textbf{P4:} $\Phi = \{\mathcal{G}_{i} = (V_i, E_i) \subseteq \mathcal{G} ~|~ i \in [1,|\Lambda(\mathcal{G})| + 1] \}$}
\end{theorem}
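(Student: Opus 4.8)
The plan is to prove all four properties by a direct analysis of Algorithm~\ref{alg:group}, with the vertex-removal update in line~5 serving as the central loop invariant. I would begin with \textbf{P4}, which is essentially structural: the loop of lines~2--7 runs exactly once per label of $\overrightarrow{\Lambda(\mathcal{G})}$, that is $n = |\Lambda(\mathcal{G})|$ times, each pass emitting a single grouping $\mathcal{G}_i$ and advancing the counter; line~9 then appends the one leftover grouping at index $n+1$. Since every $\mathcal{G}_i$ is carved out as a subgraph of $\mathcal{G}$ in lines~3 and~9, this gives $\Phi = \{\mathcal{G}_i \subseteq \mathcal{G} \mid i \in [1, |\Lambda(\mathcal{G})|+1]\}$ directly.

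For \textbf{P2} I would formalize the invariant that, at the start of iteration $i$, the working set $V$ contains none of the vertices already committed to $\mathcal{G}_1, \dots, \mathcal{G}_{i-1}$; line~5 maintains exactly this. Because the subgraphs assembled into $\mathcal{G}_i$ in line~3 are drawn only from the current $V$, two groupings formed in distinct iterations are vertex-disjoint, and the leftover grouping takes $V \setminus V(\Phi)$, disjoint from $V(\Phi)$ by construction. Vertex-disjointness then forces edge-disjointness, since every edge of a grouping is incident only to its own vertices (the grouping being a subgraph); together these give $\mathcal{G}_i \cap \mathcal{G}_j = \emptyset$ for $i \ne j$.

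\textbf{P3} follows from the same bookkeeping. Every vertex deleted during the loop is recorded in $V(\Phi)$ at line~8, and line~9 sweeps all remaining vertices into the last grouping, so with $k = |\Lambda(\mathcal{G})|+1$ one has $\bigcup_{i \in [1,k]} V_i = V$. The inclusion $\bigcup_{i} E_i \subseteq E$ is immediate from $\mathcal{G}_i \subseteq \mathcal{G}$; I would emphasize that equality genuinely fails, since any edge joining vertices assigned to two different groupings is discarded, which is precisely why the statement is phrased with $\subseteq$.

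The delicate case, and the one I expect to be the main obstacle, is \textbf{P1}. A maximally weakly label-connected subgraph on $l$ must, by condition~2 of Weak Label-Connectivity, contain at least one $l$-edge and hence a vertex, so a grouping can only fail $V_i \ne \emptyset$ by containing no subgraph whatsoever. This is a real hazard: when a low-frequency label $l_i$ is processed, all vertices incident to its edges may already have been consumed inside higher-frequency groupings (such edges can sit internally in a subgraph maximally connected on a more frequent label, by Lemma~\ref{lemma:smlc}), leaving $\mathcal{G}_i$ empty yet still counted in P4. Reconciling P1 with the exact cardinality of P4 is therefore the crux. I would resolve it by leaning on $|V| \ge 1$ and the leftover grouping of line~9 as a catch-all, arguing by induction on the label order that the removal in line~5 never strips $V$ so fast as to empty the surviving grouping; alternatively one adopts the convention that empty label-groupings are elided, under which P1 holds outright and the index range of P4 is read as an upper bound. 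Deciding which reading the construction intends is the only genuinely subtle step.
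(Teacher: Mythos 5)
Your route is genuinely different from the paper's and, on the whole, more careful. The paper dismisses P1--P3 with ``trivially hold'' and spends its effort on P4, which it proves by contradiction: if $|\Phi| > |\Lambda(\mathcal{G})|+1$, then two groupings would share the same most frequently occurring label, and Lemma~\ref{lemma:smlc} would force a component of each to be maximally weakly label-connected on that same label, contradicting their disjointness. You instead read P4 directly off the loop structure of Algorithm~\ref{alg:group} (one emission per label, plus the leftover grouping of line~9) and obtain P2/P3 from the vertex-removal invariant of line~5. Your argument is more elementary and more faithful to the algorithm as written; the paper's contradiction argument does not depend on how the groupings were produced, but note that it only establishes the upper bound $|\Phi| \leq |\Lambda(\mathcal{G})|+1$, not the exact indexing that the statement of P4 literally asserts.

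Your diagnosis of P1 as the crux is exactly right, and the tension you identify is a genuine gap in the paper's proof, not merely in yours: the paper's own running example contradicts the strict reading. In Example~\ref{ex:gg} (Figure~\ref{fig:M1}), after the groupings for $l_0$ and ${\color{red}l_5}$ are extracted, only $F_1$ and $F_2$ survive, so the label-groupings for ${\color{orange}l_4}, {\color{purple}l_3}, {\color{blue}l_1}, {\color{green}l_2}, {\color{brown}l_6}$ are all empty, and the paper outputs $\Phi = \{\mathcal{G}_1, \mathcal{G}_2, \mathcal{G}_3\}$ with $|\Phi| = 3 < |\Lambda(\mathcal{G})|+1 = 8$. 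Consequently, of your two candidate repairs, the inductive one (``removal never empties a surviving grouping'') cannot work --- empty label-groupings really do occur --- and only the eliding convention is viable: empty groupings are dropped from $\Phi$, P1 then holds outright, and the index range in P4 must be read as an upper bound, i.e., $|\Phi| \leq |\Lambda(\mathcal{G})|+1$. That reading is also the only one consistent with the paper's own P4 argument, which derives a contradiction from $|\Phi| > |\Lambda(\mathcal{G})|+1$ and nothing stronger. With that convention fixed, your loop-invariant proof goes through completely, whereas the paper's ``P1, P2, P3 trivially hold'' never confronts the issue at all.
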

\begin{proof} $\textbf{P1}, \textbf{P2}, \textbf{P3}$ trivially hold.
Let us prove $\textbf{P4}$. If $E = \emptyset$, $\Phi = \{ \mathcal{G}\}$. Otherwise, a
label $l \in \overrightarrow{\Lambda(\mathcal{G})}$ exist and we can construct a grouping $\mathcal{G}_i$,
$\lambda(\mathcal{G}_i) = l$. Assume $\Phi > |\Lambda(\mathcal{G})| + 1$. At least two groupings, 
$\mathcal{G}_{i}, \mathcal{G}_{j}$, with the same most frequently 
occurring label, $l$, exist in $\Phi$. As $|\mathcal{G}_{i}| \geq 1$ and $|\mathcal{G}_{j}| \geq 1$, each contains
a maximally weakly connected component, $\mathcal{G}^{k_i}_{i}$, $\mathcal{G}^{k_j}_{j}$. 
From Lemma \ref{lemma:smlc}, $\lambda(\mathcal{G}^{k_i}_i) = \lambda(\mathcal{G}^{k_j}_j)$, contradicting
$\mathcal{G}_{i} \cap \mathcal{G}_{j} \neq \emptyset$.
\end{proof}
We illustrate the above algorithm, in Figure \ref{fig:M1}, as follows. 

\begin{example}[Graph Grouping]\label{ex:gg}
Let $\mathcal{G}$ be the graph from Figure \ref{ref:sn}. It holds that: 
$\#l_0 = 11, \, \#{\color{blue}l_1} = 3, \, \#{\color{green}l_2} = 2, \, \#{\color{purple}l_3} = 6, \, \#{\color{orange}l_4} = 7, \, \#{\color{red}l_5} = 7, \, \#{\color{brown}l_6} = 1$.
Hence, we can take $\overrightarrow{\Lambda(\mathcal{G})} = [l_0, {\color{red}l_5}, {\color{orange}l_4}, {\color{purple}l_3}, {\color{blue}l_1}, {\color{green}l_2}, {\color{brown}l_6}]$.
Note that, as $\#{\color{orange}l_4} = \#{\color{red}l_5}$, we can choose an arbitrary order between the labels in $\overrightarrow{\Lambda(\mathcal{G})}$.
Following Algorithm \ref{alg:group}, we first add $\mathcal{G}_1$ to $\Phi$, as it regroups the maximal weakly-label components on $l_0$. 
We then have $V = \{R_1,\ldots, R_7, M_1, \ldots, M_6, F_1, F_2\}$.
Next, we add $\mathcal{G}_2$ to $\Phi$, as it regroups the maximally weakly-label component on ${\color{red}l_5}$. 
We obtain $V = \{F_1, F_2\}$.
We add the remaining subgraph, $\mathcal{G}_3$, to $\Phi$ and output $\Phi = \{\mathcal{G}_1, \mathcal{G}_2, \mathcal{G}_3\}$, as illustrated 
in Figure \ref{fig:M1}.
\pgfdeclarelayer{background layer}
\pgfdeclarelayer{foreground layer}
\pgfsetlayers{background layer,main,foreground layer}
\tikzstyle{leg}=[rectangle, rounded corners, thin,
                       fill=gray!20, text=blue, draw]
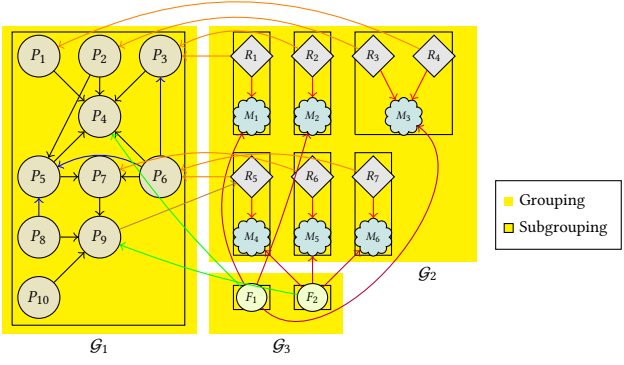
\begin{figure}[t!]
   \begin{tikzpicture}[scale=1.6, every node/.style={scale=0.7}]
   \tikzstyle{fe}=[ellipse, draw, thin,fill=lime!20, scale=0.8]
   \tikzstyle{mr}=[cloud, draw, thin,fill=teal!20, scale=0.7]
   \tikzstyle{rd}=[diamond, draw, thin,fill=gray!20, scale=0.8]
   \tikzstyle{mm}=[rectangle, maximum height=0cm, draw, thin]
   \tikzstyle{gr}=[rectangle, draw=none, thin,fill=yellow, scale=0.8]
   \tikzstyle{sn}=[rectangle, draw, thin,fill=yellow, scale=0.8]

  \begin{pgfonlayer}{background layer}
     \fill[yellow] (-0.3,0.2) rectangle (1.3,2.75);
     \fill[yellow] (1.4,0.2) rectangle (2.5,0.7);
     \fill[yellow] (3.6,0.8) rectangle (1.4,2.75);
    \draw (-0.22,0.27) -- (-0.22,2.7)-- (1.2,2.7) -- (1.2,0.27) --  (-0.22,0.27) ;
    \draw (1.6,1.85) -- (1.6,2.7) -- (1.9,2.7) -- (1.9,1.85) -- (1.6,1.85)   ;
    \draw (2.1,1.85) -- (2.1,2.7) -- (2.4,2.7) -- (2.4,1.85) -- (2.1,1.85)   ;
    \draw (2.6,1.85) -- (2.6,2.7) -- (3.4,2.7) -- (3.4,1.85) -- (2.6,1.85)   ;
    \draw (1.6,0.85) -- (1.6,1.7) -- (1.9,1.7) -- (1.9,0.85) -- (1.6,0.85)   ;
    \draw (2.1,0.85) -- (2.1,1.7) -- (2.4,1.7) -- (2.4,0.85) -- (2.1,0.85)   ;
    \draw (2.6,0.85) -- (2.6,1.7) -- (2.9,1.7) -- (2.9,0.85) -- (2.6,0.85)   ;
    \draw (1.6,0.4) -- (1.6,0.6)-- (1.9,0.6) -- (1.9,0.4) -- (1.6,0.4) ;
    \draw (2.1,0.4) -- (2.1,0.6)-- (2.4,0.6) -- (2.4,0.4) -- (2.1,0.4) ;

  \tikzstyle{every state}=[fill=olive!20,draw=black,text=black, inner sep=0.8pt]
    \node (G) at (0.5, 0.1) {$\mathcal{G}_1$};
    \node (G) at (2, 0.1) {$\mathcal{G}_3$};
    \node (G) at (3.2, 0.7) {$\mathcal{G}_2$};
    \node[state]         (P1)  at (0,2.5)    {$P_1$};
    \node[state]         (P2)  at (0.5,2.5)  {$P_2$};
    \node[state]         (P3)  at (1,2.5)    {$P_3$};
    \node[state]         (P4)  at (0.5,2)    {$P_4$};
    \node[state]         (P5)  at (0,1.5)    {$P_5$};
    \node[state]         (P6)  at (1,1.5)    {$P_6$};
    \node[state]         (P7)  at (0.5,1.5)  {$P_7$};
    \node[state]         (P8)  at (0,1)      {$P_8$};
    \node[state]         (P9)  at (0.5,1)    {$P_9$};
    \node[state]         (P10) at (0,0.5)    {$P_{10}$};
    \node[rd]         (R1)  at (1.75,2.5)    {$R_1$};
    \node[rd]         (R2)  at (2.25,2.5)    {$R_2$};
    \node[rd]         (R3)  at (2.75,2.5)    {$R_3$};
    \node[rd]         (R4)  at (3.25,2.5)    {$R_4$};
    \node[mr]         (M1)  at (1.75,2)      {$M_1$};
    \node[mr]         (M2)  at (2.25,2)      {$M_2$};
    \node[mr]         (M3)  at (3,2)         {$M_3$};
    \node[fe]            (F1)  at (1.75,0.5)    {$F_1$};
    \node[fe]            (F2)  at (2.25,0.5)  {$F_2$};
    \node[mr]         (M4)  at (1.75,1)   {$M_4$};
    \node[mr]         (M5)  at (2.25,1)   {$M_5$};
    \node[mr]         (M6)  at (2.75,1)   {$M_6$};
    \node[rd]         (R5)  at (1.75,1.5)   {$R_5$};
    \node[rd]         (R6)  at (2.25,1.5)   {$R_6$};
    \node[rd]         (R7)  at (2.75,1.5)   {$R_7$};

  \path (P1) edge[->]   node {}  (P4)
	 (P2) edge[->]   node {}  (P4)
	 (P3) edge[->]   node {}  (P4)
	 (P5) edge[->]   node {}  (P4)
	 (P6) edge[->]   node {}  (P4)
	 (P6) edge[->, color=blue]   node {}  (P3)
	 (P6) edge[->, bend right=28, above, color=blue]   node {}  (P5)
	 (P2) edge[->]   node {}  (P5)
	 (P5) edge[->]   node {}  (P7)
	 (P6) edge[->]   node {}  (P7)
	 (P7) edge[->]   node {}  (P9)
	 (P8) edge[->]   node {}  (P9)
	 (P8) edge[->, color=blue]   node {}  (P5)
	 (P10) edge[->]  node {}  (P9)
	 (R1) edge[->, color=red]   node {}  (M1)
	 (R2) edge[->, color=red]   node {}  (M2)
	 (R3) edge[->, color=red]   node {}  (M3)
	 (R4) edge[->, color=red]   node {}  (M3)
 	 (F1) edge[->, bend left=29, color=purple]   node {}  (M1)
 	 (F1) edge[->, bend right=3, color=purple]   node {}  (M2)
 	 (F1) edge[->, bend right=95, color=purple]   node {}  (M3)
 	 (F2) edge[->, color=purple]   node {}  (M4)
 	 (F2) edge[->, color=purple]   node {}  (M5)
 	 (F2) edge[->, color=purple]   node {}  (M6)
	 (R5) edge[->, color=red]   node {}  (M4)
	 (R6) edge[->, color=red]   node {}  (M5)
   	 (R7) edge[->, color=red]   node {}  (M6)
   	 (F1) edge[->, bend left=5, color=green]   node {}  (P4)
   	 (F2) edge[->, bend left=5, color=green]   node {}  (P9)
   	 (R1) edge[->, color=orange]   node {}  (P3)
   	 (R5) edge[->, color=orange]   node {}  (P6)
   	 (R6) edge[->, bend right=20, above, color=orange]   node {}  (P7)
   	 (R7) edge[->, bend right=19, above, color=orange]   node {}  (P6)
   	 (R2) edge[->, bend right=28, above, color=orange]   node {}  (P3)
   	 (R3) edge[->, bend right=26, above, color=orange]   node {}  (P2)
   	 (R4) edge[->, bend right=26, above, color=orange]   node {}  (P1)
   	 (P9) edge[->, color=brown]   node {}  (R5)
   ;
   \end{pgfonlayer}

   \matrix(table) [draw,below right] at (current bounding box.east) {
     \node [gr,label=right:Grouping]  {}; \\
     \node [sn,label=right:Subgrouping] {}; \\
      };
  \end{tikzpicture}
\caption{Summarizing $\mathcal{G}_{SN}$ (Grouping Phase)} \label{fig:M1}
\end{figure}
\end{example}
%
%
%
%
\subsection{Evaluation Phase}\label{sec:evalphase}

The \textbf{evaluation phase} takes as input $\Phi$, the $\mathcal{G}$-partitioning from Algorithm \ref{alg:group},
and $\Lambda_Q$, a set of labels, and outputs $\mathcal{G}^{*} = (V^{*}, E^{*})$, an AQP-amenable compression of $\mathcal{G}$. 
The phase creates $V^{*}$, the set of \emph{supernodes} (SN), 
and $E^{*}$, the set of \emph{superedges} (SE). 
After each step, $\mathcal{G}^{*}$ is enriched with AQP-relevant properties, 
exploited in Section \ref{sec:aqp}. Next, we explain how \underline{\emph{supernodes}}, \underline{\emph{AQP-properties}}, and 
\underline{\emph{superedges}} are computed. 

\begin{definition}[Supernodes (SN)]\label{def:sn}
Let $\Phi$ be a $\mathcal{G}$-partitioning into groupings, $\mathcal{G}_i$.
$\Phi$ is transformed into a set of supernodes, $V^{*} = VFUSE(\Phi)$ (see Algorithm \ref{alg:vf}).
Each \emph{supernode}, $v^{*} \in V^{*}$, is obtained by fusing all vertices and edges of each subgrouping $\mathcal{G}^{k}_i$,
$\mathcal{G}^{k}_i \in \mathcal{G}_i$. We denote the label $l$, such that $\lambda(\mathcal{G}^{k}_i) = l$, as $\lambda(v^{*})$.
\end{definition}

\begin{algorithm}[H]
\begin{flushleft}
 \textbf{Input:}  $\Phi$ -- a graph partitioning;\\
 \textbf{Output:} $V^{*}$ -- set of \emph{supernodes}
\end{flushleft}
\begin{algorithmic}[1]
\caption{VFUSE($\Phi$)}
\State $V^{*} \leftarrow \emptyset$
\For {\textbf{all} $ \mathcal{G}_{i} \in \Phi$}
\For {\textbf{all} $ \mathcal{G}^{k}_{i} \in \mathcal{G}_{i}$}
\State $v^{*}_{k} \leftarrow \mathcal{G}^{k}_{i}$
\State $V^{*} \leftarrow V^{*} \cup \{v^{*}_{k}\}$
\EndFor
\EndFor
\State \Return $V^{*}$
\end{algorithmic}
\caption{}
\label{alg:vf}
\end{algorithm}

\noindent \underline{\emph{AQP-properties}}. Consider $l \in \Lambda(\mathcal{G})$ and supernodes $v^{*}_i, v^{*}_j \in V^{*}$. 
We call a $l$-labeled edge from $v^{*}_i$ to $v^{*}_j$ a \emph{cross-edge} and define 
$E_{i,j}(l) = \{e \in E | e.1 \in v^{*}_i \wedge e.2 \in v^{*}_j \wedge \gamma(e) = l\}$. 
For every $v^{*} \in \mathcal{G}^{k}_{i}$, we then associate $\sigma(v^{*})$, a set consisting of properties that regard \emph{compression},
\emph{label-connectivity}, and \emph{pairwise label-traversal}. We explain each of these below. \\
%
\noindent{\textbf{Compression}. We record the number of \emph{inner vertices} in $v^{*}$ as $\emph{VWeight}(v^{*}): |V^{k}_i|$ 
and that of \emph{inner edges} as $\emph{EWeight}(v^{*}): |E^{k}_i|$.\\
\noindent{\textbf{Label-Connectivity}. The percentage-wise occurrence 
of $l$ in $v^{*}$ is $\emph{LPercent}(v^{*}, l): \frac{|\{e \, \in \, E^{k}_i|~ \gamma(e) \, = \, l|}{\emph{EWeight}(v^{*})}$.
The number of vertex pairs connected with an $l$-labeled edge is 
$\emph{LReach}(v^{*}, l): |\{(v_1, v_2) \in V^{k}_i \times V^{k}_i|~ l^{+}(v_1, v_2) \in \mathcal{G}^{k}_{i}\}|$.\\
\noindent{\textbf{Pairwise Label-Traversal}. Consider labels $l_1, l_2 \in \Lambda(\mathcal{G})$ and direction indices, $d_1, d_2 \in \{1,2\}$.
We compute the number of paths between two cross-edges with labels $l_1$, $l_2$, 
directions $d_1$, $d_2$, and a common node, as $\emph{EReach}(v^{*}, l_1, l_2, d_1, d_2) = |\{(e^{*}_1, e^{*}_2) | \{e^{*}_1, e^{*}_2\} \subseteq E^{*}\setminus E^{k}_i
\wedge \gamma(e^{*}_1) = l_1 \wedge \gamma(e^{*}_2) = l_2 \wedge e^{*}_1.d_1 = v^{*} = e^{*}_2.d_2\}|$. We compute the number of \emph{traversal edges}, i.e., inner/cross-edge pairs, $e_1, e_2$, with labels $l_1$, $l_2$, directions
$d_1, d_2$, and $v^{*}$ as common endpoint, as $\delta(v^{*}, l_1, l_2, d_1, d_2) = 
|\{(e_1, e_2) |~ e_1 \in E^{*} \setminus E^{k}_i \wedge \gamma(e_1) = l_1 \wedge e_2 \in E^{k}_i \wedge \gamma(e_2) = l_2 \wedge
e_1.d_1 = v^{*} = e_2.d_2\}|$. We take the number of \emph{frontier vertices}, given a fixed label $l$ and direction $d$, to be
$V_F(v^{*}, l, d) = \{v|~ v \in v^{*} \wedge \exists e, e \in E \setminus E^{k}_i \wedge \gamma(e) = l \wedge e.d = v\}|$.
Finally, we consider the number of $l_1$-labeled cross-edges relative to that of frontier vertices on $l_2$ to represent 
the \emph{relative label participation}, computed as
$\emph{RLPart}(v^{*}, l_1, l_2, d_1, d_2): (\sum\limits_{v \in V^{k}_i}\delta(v, l_1, d_1, d_2, V \setminus V^{k}_i))/|V_F(v^{*}, l_2, d_2)|$.
\setlength{\textfloatsep}{0pt}
\begin{algorithm}[H]
\begin{flushleft}
 \textbf{Input:} $V$ -- a set of vertices, $\Lambda$ -- a set of labels;\\
 \textbf{Output:} $V$ -- property-enriched set of vertices
\end{flushleft}
\begin{algorithmic}[1]
\caption{VProperties($\Phi$, $\Lambda$)}
\For {\textbf{all} $v \in V$}
\State $v.vweight \leftarrow VWeight(v)$, 
\State $v.eweight \leftarrow EWeight(v)$
\For {\textbf{all} $(l_1,l_2) \in \Lambda$, $(d_1, d_2) \in \{1,2\}$}
\State $v.plabel(l_1) \leftarrow LPercent(v,l_1)$
\State $v.lreach(l_1) \leftarrow LReach(v,l_1)$
\State {$v.ereach(l_1, d_1, d_2) \leftarrow EReach(v,l_1,d_1,d_2)$}
\State {$v.rlpart(l_1, l_2, d_1, d_2) \leftarrow RLReach(v,l_1, l_2, d_1,d_2)$}
\EndFor
\EndFor
\State \Return $V$
\end{algorithmic}
\label{alg:ealg}
\end{algorithm}
We illustrate these properties via an example. 
\begin{example}[Supernode Properties]\label{snprop}
In Fig.\ref{fig:reach}, we have that $EReach(v^{*}_1, l_1, l_2, 1, 1)= \delta(v^{*}_2, l_1, l_2, 1, 2)= 1$,
$V_F(v^{*}_2, l_1, 1) = \{v_1, v_3\}$. 
 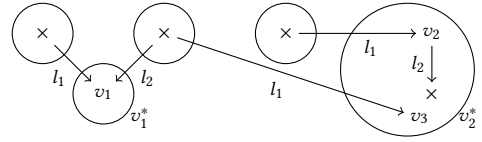
\begin{figure}[t!]
 \begin{center}
 \begin{tikzpicture}[scale=1.6, every node/.style={scale=0.8}]
  \draw[] (1,1) circle (0.25cm);
  \draw[] (0.5,1.5) circle (0.25cm);
  \draw[] (1.5,1.5) circle (0.25cm);
  \draw[] (2.5,1.5) circle (0.25cm);
  \draw[] (3.5,1.2) circle (0.55cm);
      \node[] (N1)  at (1,1)       {$v_1$};
      \node[] (N2)  at (0.5,1.5)   {$\times$};
      \node[] (N3)  at (1.5,1.5)   {$\times$};
      \node[] (N4)  at (2.5,1.5)   {$\times$};
      \node[] (N5)  at (3.7,1)     {$\times$};
      \node[] (N6)  at (3.7,1.5)   {$v_2$};
      \node[] (N7)  at (3.6,0.8)   {$v_3$};
      \node[] (N8)  at (1.3,0.8)   {$v^{*}_1$};
      \node[] (N9)  at (4,0.8)   {$v^{*}_2$};
 \path 
     (N3) edge[->] node[below right,scale=1]{${\color{black}l_2}$} (N1)
     (N2) edge[->] node[below left,scale=1]{${\color{black}l_1}$} (N1)
     (N4) edge[->] node[below right,scale=1]{${\color{black}l_1}$} (N6)
     (N6) edge[->] node[left,scale=1]{${\color{black}l_2}$} (N5)
     (N3) edge[->] node[below left,scale=1]{${\color{black}l_1}$} (N7)
     ;
 \end{tikzpicture}
\vspace{-3mm}
\end{center}
\caption{Traversal Nodes/Edges and Frontier Vertices}
\label{fig:reach}
\end{figure}
\end{example}
%
%

%
We now proceed to explaining the creation of \underline{\emph{superedges}}.

\begin{definition}[Superedges (SE)]
A \emph{superedge}, $e^{*} \in E^{*}$, is obtained by merging all cross-edges $e$, $e \in E_{i,j}(l)$,
$l \in \Lambda(\mathcal{G})$, as described in the below algorithm. To each such $e^{*}$ we then associate a weight, 
$\emph{EWeight}(e^{*}): |\{e \in E ~|~ e \in e^{*}\}|$. 
\end{definition}
\setlength{\textfloatsep}{0pt}
\begin{algorithm}[H]
\begin{flushleft}
 \textbf{Input:}  $\Phi$ -- a graph partitioning, $\Lambda_{Q}$ -- a set of label pairs;
 \textbf{Output:} $\mathcal{G}^{*}$ -- a graph with \emph{supernodes} and \emph{superedges}
\end{flushleft}
\begin{algorithmic}[1]
\caption{EFUSE($V^{*}, \Lambda$)}
\State $E^{*} \leftarrow \emptyset$
\For {\textbf{all} $ l \in \Lambda$}
\State $E_l \leftarrow \{e \in E | \gamma(e) = l\}$
\For {\textbf{all} $v^{*}_i, v^{*}_j \in V^{*}$}
\For {\textbf{all} $v_i \in v^{*}_i, v_j \in v^{*}_j$}
\State $e^{*} \leftarrow \{l(v^{*}_i, v^{*}_j) | l(v_i, v_j) \in E_l\}$
\State $E^{*} \leftarrow E^{*} \cup \{e^{*}\}$
\EndFor
\EndFor
\EndFor
\State \Return $E^{*}$
\end{algorithmic}
\label{alg:ealg}
\end{algorithm}

The \textbf{evaluation phase} is summarized next.
\setlength{\textfloatsep}{0pt}
\begin{algorithm}[H]
\begin{flushleft}
 \textbf{Input:}  $\Phi$ -- a graph partitioning, $\Lambda$ -- a set of labels;\\
 \textbf{Output:} $\mathcal{G}^{*}$ -- a graph with \emph{supernodes} and \emph{superedges}
\end{flushleft}
\begin{algorithmic}[1]
\caption{EVALUATE($\Phi$, $\Lambda$)}
\State $V^{*} \leftarrow VFUSE(\Phi)$
\State $V^{*} \leftarrow VProperties(V^{*}, \Lambda)$
\State $\hat{E} \leftarrow EFUSE(V^{*}, \Lambda)$
\For {\textbf{all} $e^{*} \in E^{*}$}
\State $e^{*}.weight \leftarrow EWeight(e^{*})$
\EndFor
\State \Return $\mathcal{G}^{*} = (V^{*}, E^{*})$
\end{algorithmic}
\label{alg:ealg}
\end{algorithm}

We illustrate the above by revisiting our running example. 

\begin{example}[Graph Compression]

In Figure \ref{fig:M2}, we display the supergraph $\mathcal{G}^{*}$, obtained from $\Phi$, after the
\textbf{evaluation phase}. Each supernode corresponds to a $\Phi$-subgrouping, as
depicted in Figure \ref{fig:M1}. Each superedge is obtained by compressing similarly labeled edges, 
whose source and, respectively, target vertices are in the same supernode. Each superedge $e^{*}$
has $EWeight(e^{*}) = 1$, except that connecting $SN_4$ and $SN_1$, whose edge weight is 2.
\noindent
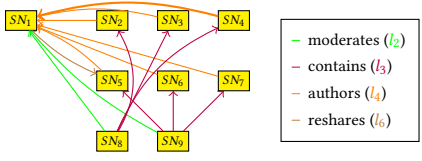
\begin{figure}[t!]
\begin{center}
   \begin{tikzpicture}[scale=1.6, every node/.style={scale=0.7}]
   \tikzstyle{sn}=[rectangle, draw, thin,fill=yellow, scale=0.8]
   \tikzstyle{every state}=[fill=olive!20,draw=black,text=black, inner sep=0.8pt]
    \node[sn]         (SN1)  at (3,2.5)  {$SN_1$};
    \node[sn]         (SN2)  at (3.75,2.5) {$SN_2$};
    \node[sn]         (SN3)  at (4.25,2.5) {$SN_3$};
    \node[sn]         (SN4)  at (4.75,2.5) {$SN_4$};
    \node[sn]         (SN5)  at (3.75,2) {$SN_5$};
    \node[sn]         (SN6)  at (4.25,2) {$SN_6$};
    \node[sn]         (SN7)  at (4.75,2) {$SN_7$};
    \node[sn]         (SN8)  at (3.75,1.5) {$SN_8$};
    \node[sn]         (SN9)  at (4.25,1.5) {$SN_9$};

   \path  (SN4) edge[thick, ->, bend right=15, above, color=orange] node {}  (SN1)
   (SN3) edge[->, bend right=13, above, color=orange] node {}  (SN1)
   (SN2) edge[->, color=orange] node {}  (SN1)
   (SN5) edge[->, color=orange, bend right=10] node {}  (SN1)
   (SN6) edge[->, color=orange] node {}  (SN1)
   (SN7) edge[->, color=orange] node {}  (SN1)
   (SN8) edge[->, color=green] node {}  (SN1)
   (SN9) edge[->, bend left=10, color=green] node {}  (SN1)
   
   (SN8) edge[->, bend right=30, color=purple] node {}  (SN2)
   (SN8) edge[->, color=purple] node {}  (SN3)
   (SN9) edge[->, color=purple] node {}  (SN5)
   (SN9) edge[->, color=purple] node {}  (SN6)
   (SN9) edge[->, color=purple] node {}  (SN7)
   (SN8) edge[->, bend left=20, color=purple] node {}  (SN4)
   (SN1) edge[->, color=brown, bend right=10] node {}  (SN5)
   ;     
    \hspace{4mm}
       \matrix(table) [draw,right,yshift=-5pt] at (current bounding box.east) {
      & \node [state,opacity=0,text opacity=1, label=right:moderates ({\color{green}$l_2$}),scale=0.3] {{\color{green}\par\noindent\rule{15pt}{1pt}}}; \\
      ;
        & \node [state,opacity=0,text opacity=1, label=right:contains ({\color{purple}$l_3$}),scale=0.3] {{\color{purple}\par\noindent\rule{15pt}{1pt}}}; \\
        & \node [state,opacity=0,text opacity=1, label=right:authors ({\color{orange}$l_4$}),scale=0.3]  {{\color{orange}\par\noindent\rule{15pt}{1pt}}}; \\
        & \node [state,opacity=0,text opacity=1, label=right:reshares ({\color{brown}$l_6$}),scale=0.3]  {{\color{brown}\par\noindent\rule{15pt}{1pt}}}; \\
      };
  \end{tikzpicture}
 \vspace{1.5mm}
\caption{Summarizing $\mathcal{G}_{SN}$ (Evaluation Phase)} \label{fig:M2}
\end{center}
\end{figure}
\end{example}
\subsection{Merge Phase}\label{sec:mergephase}

The \textbf{merge phase} takes as input a graph, $\mathcal{G}^{*}$, as computed by Algorithm \ref{alg:merge},
along with a set of labels, $\Lambda_Q$, and outputs a compressed graph, 
$\hat{\mathcal{G}} = (\hat{V}, \hat{E})$. The phase proceeds in two steps, corresponding to the creation of $\hat{V}$, 
the set of \underline{\emph{hypernodes}} (HN), and, respectively, to that of $\hat{E}$, the set of \underline{\emph{hyperedges}} (HE). 
At each step, $\hat{G}$ is enriched with \underline{\emph{AQP-relevant properties}} (see Section \ref{sec:aqp}).

\emph{Hypernodes} are computed by merging together supernodes based on various criteria, according to Definition \ref{def:hn}.
The primary, \emph{inner-merge}, condition for merging candidate supernodes is for them to be maximal weakly label-connected on the same label. 
The \emph{source-merge} heuristic additionally requires that they share the same set of outgoing labels,
while the \emph{target-merge} heuristic requires that they share the same set of ingoing labels. 
\begin{definition}[Hypernodes (HN)]\label{def:hn}
Let $V^{*}$ be a supernode set, where $V^{*} = \{v^{*}_1, \ldots, v^{*}_n\}$. 
$V^{*}$ is transformed into a set of hypernodes, $\hat{V} = VMERGE(V^{*})$,
where $\hat{V} = \{\hat{v}_1, \ldots, \hat{v}_m\}$. A \emph{hypernode} $\hat{v}_k \in \hat{V}$
corresponds to fusing a subset $V^{*}_k$ of $V^{*}$, such that, for all $i,j \in [1,n]$,
$\{v^{*}_i, v^{*}_j\} \in V^{*}_k$, if $\lambda(v^{*}_i) = \lambda(v^{*}_j)$, and either of the following holds:
{\noindent{\textbf{Case 1.}} $\Lambda_{+}(v^{*}_i) = \Lambda_{+}(v^{*}_j)$, for all $i,j \in [1,n]$.}
{\noindent{\textbf{Case 2.}} $\Lambda_{-}(v^{*}_i) = \Lambda_{-}(v^{*}_j)$, for all $i,j \in [1,n]$.}
The first condition defines the \emph{target-merge} heuristic, while the latter, the \emph{source-merge} one. 
\end{definition}

To each such hypernode, $\hat{v}$, we associate a property set.

\begin{definition}[Hypernode Properties] \label{def:hn}
The \emph{property set} of $\hat{v}$, $\sigma(\hat{v})$ is the same as in Definition \ref{def:sn}.
Except for LPercent, the values of hypernode properties are obtained by adding all those corresponding to \emph{supernodes} $v^{*}$,
such that $v^{*} \in \hat{v}$. For the label percentage property on a given label $l$, we have:
$\emph{LPercent}(\hat{v}, l): \frac{\sum\limits_{v^{*} \in \hat{v}} LPercent(v^{*},l) \,*\, EWeight(v^{*})}{\sum\limits_{v^{*} \in \hat{v}} 
EWeight(v^{*})}$.
\end{definition}

\emph{Superedges} are merged into \emph{hyperedges}, if they share the same label and endpoints, as defined below.

\setlength{\textfloatsep}{0pt}
\begin{algorithm}[H]
\begin{flushleft}
 \textbf{Input:}  $V^{*}$ -- a set of supernodes; $\Lambda$ -- a set of labels;\\
                  \qquad \quad$\;m$ -- heuristic mode \\
 \textbf{Output:} $\hat{V}$ -- a set of hypernodes
\end{flushleft}
\begin{algorithmic}[1]
\caption{VMERGE($V^{*}$, $\Lambda$, $m$)}
\For {\textbf{all} $v^{*} \in V^{*}$}
\State{\small$\Lambda_{+}(v^{*})\leftarrow\{l \in \Lambda ~|~ \exists v^{*}_s, v^{*}_s \in V^{*} \wedge l(v^{*}_s,v^{*}) \in E^{*} \}$}
\State{\small$\Lambda_{-}(v^{*})\leftarrow\{l \in \Lambda ~|~ \exists v^{*}_t, v^{*}_t \in V^{*} \wedge l(v^{*},v^{*}_t) \in E^{*} \}$}
\EndFor
\For {\textbf{all} $v^{*}_1, v^{*}_2 \in V^{*}$}
\State $b_{\lambda} \leftarrow \lambda(v^{*}_1)= \lambda(v^{*}_2)$ \Comment {Inner-Merge Condition}
\State $b_{+} \leftarrow \Lambda_{+}(v^{*}_1) = \Lambda_{+}(v^{*}_2)$, $b_{-} \leftarrow \Lambda_{-}(v^{*}_1) = \Lambda_{-}(v^{*}_2)$
\If {$m = \texttt{true}$}
\State $\hat{v} \leftarrow \{v^{*}_1, v^{*}_2 ~~|~~ b_{\lambda} \wedge b_{+} = \texttt{true} \}$ \Comment {\color{blue}{Target-Merge}}
\Else {}
\State $\hat{v} \leftarrow \{v^{*}_1, v^{*}_2 ~~|~~ b_{\lambda} \wedge b_{-} = \texttt{true} \}$ \Comment {\color{blue}{Source-Merge}}
\EndIf
\EndFor
\State $\hat{V} \leftarrow \{\hat{v}_{k} ~~|~~ k \in [1, |V^{*}|] \}$
\State \Return $\hat{V}$
\end{algorithmic}
\caption{}
\label{alg:merge}
\end{algorithm}

\begin{definition}[Hyperedges (HE)] \label{def:he}
Let $E^{*}$ be a superedge set, $E^{*} = \{e^{*}_1, \ldots, e^{*}_n\}$. 
$E^{*}$ is transformed into a hyperedge set, $\hat{E} = EMERGE(E^{*})$,
where $\hat{E} = \{\hat{e}_1, \ldots, \hat{e}_m\}$. A \emph{hyperedge}, $\hat{e}_k$, $\hat{e}_k \in \hat{e}$,
corresponds to fusing a subset $E^{*}_k$ of $E^{*}$, such that, for $l \in \Lambda(\mathcal{G})$ and
$i,j \in [1,n]$, $\{e^{*}_i, e^{*}_j\} \subseteq E^{*}_k$, if
$e^{*}_i \stackrel{l}{=} e^{*}_j$, $e^{*}_i.1 = e^{*}_j.1$, and $e^{*}_i.2 = e^{*}_j.2$.
To each hyperedge $\hat{e}$ we associate a weight, 
$\emph{EWeight}(\hat{e})$, corresponding to $|E^{*}_k|$, the number of superedges $\hat{e}_k$ compressed.
\end{definition}

The \textbf{merge phase} is captured in Algorithm \ref{fmerge}.

\begin{algorithm}[H]
\begin{flushleft}
 \textbf{Input:}  $\mathcal{G}^{*}$ -- a graph; $\Lambda_{Q}$ -- a set of label pairs; \\
                  \qquad \quad$\;m$ -- heuristic mode \\
 \textbf{Output:} $\hat{\mathcal{G}}$ -- a graph summary
\end{flushleft}
\begin{algorithmic}[1]
\caption{MERGE($\mathcal{G}^{*}$, $\Lambda$, $m$)}

\State $\hat{V} \leftarrow VMERGE(V^{*},\Lambda,m)$
\State $\hat{V} \leftarrow VProperties(\hat{V}, \Lambda)$
\State $\hat{E} \leftarrow EFUSE(E^{*}, \Lambda(\mathcal{G}^{*}))$
\For {\textbf{all} $\hat{e} \in \hat{E}$}
\State $\hat{e}.eweight \leftarrow EWeight(\hat{e})$
\EndFor
\State \Return $\hat{\mathcal{G}} = (\hat{V}, \hat{E})$
\end{algorithmic}
\caption{}
\label{fmerge}
\end{algorithm}

Finally, we depict the resulting GRASP-summarization of our running example, as follows . 

\begin{example}[Graph Compression]

The graphs in Figure \ref{fig:M3} are obtained from $\mathcal{G}^{*} = (V^{*},E^{*})$, after the
\textbf{merge phase}. Each hypernode corresponds to the fusion of the supernodes in Figure \ref{fig:M2},
by the heuristics target-merge (left) and source-merge (right).
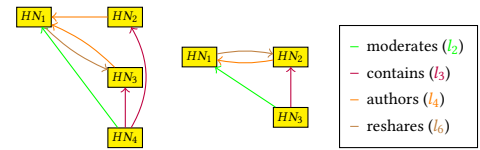
\begin{figure}[h!]
\begin{center}
   \begin{tikzpicture}[scale=1.6, every node/.style={scale=0.7}]
   \tikzstyle{sn}=[rectangle, draw, thin,fill=yellow, scale=0.8]
   \tikzstyle{every state}=[fill=olive!20,draw=black,text=black, inner sep=0.8pt]
    \node[sn]         (SN1)  at (3,2.5)  {$HN_1$};
    \node[sn]         (SN2)  at (3.75,2.5) {$HN_2$};
    \node[sn]         (SN5)  at (3.75,2) {$HN_3$};
    \node[sn]         (SN8)  at (3.75,1.5) {$HN_4$};

   \path  
   (SN2) edge[->, color=orange] node {}  (SN1)
   (SN5) edge[->, bend right=10, color=orange] node {}  (SN1)
   (SN8) edge[->, color=green] node {}  (SN1)
   (SN8) edge[->, bend right=30, color=purple] node {}  (SN2)
   (SN8) edge[->, color=purple] node {}  (SN5)
   (SN1) edge[->, bend right=10, color=brown] node {}  (SN5)
   ;
  \end{tikzpicture}
   \quad 
   \begin{tikzpicture}[scale=1.6, every node/.style={scale=0.7}]
   \tikzstyle{sn}=[rectangle, draw, thin,fill=yellow, scale=0.8]
   \tikzstyle{every state}=[fill=olive!20,draw=black,text=black, inner sep=0.8pt]
    \node[sn]         (SN1)  at (3,2.5)  {$HN_1$};
    \node[sn]         (SN2)  at (3.75,2.5) {$HN_2$};
    \node[sn]         (SN5)  at (3.75,2) {$HN_3$};
 
   \path  
   (SN2) edge[->, color=orange, bend left=10] node {}  (SN1)
   (SN1) edge[->, color=brown, bend left=10] node {}  (SN2)
   (SN5) edge[->, color=purple] node {}  (SN2)
   (SN5) edge[->, color=green] node {}  (SN1)
   ;
   
         \hspace{4mm}
       \matrix(table) [draw,right] at (current bounding box.east) {
      & \node [state,opacity=0,text opacity=1, label=right:moderates ({\color{green}$l_2$}),scale=0.3] {{\color{green}\par\noindent\rule{15pt}{1pt}}}; \\
      ;
        & \node [state,opacity=0,text opacity=1, label=right:contains ({\color{purple}$l_3$}),scale=0.3] {{\color{purple}\par\noindent\rule{15pt}{1pt}}}; \\
        & \node [state,opacity=0,text opacity=1, label=right:authors ({\color{orange}$l_4$}),scale=0.3]  {{\color{orange}\par\noindent\rule{15pt}{1pt}}}; \\
        & \node [state,opacity=0,text opacity=1, label=right:reshares ({\color{brown}$l_6$}),scale=0.3]  {{\color{brown}\par\noindent\rule{15pt}{1pt}}}; \\
      };
  \end{tikzpicture}
 \vspace{1.5mm}
\caption{Summarizing $\mathcal{G}_{SN}$ (Merge Phase)} \label{fig:M3}
\end{center}
\end{figure}
\end{example}

Finally, we show the intractability of the graph summarization problem, under the conditions of our algorithm. 

\begin{definition}[Summarization Function]\label{fsummary}
Let $\mathcal{G} = (V,E)$ be a graph and $\Phi =\{\mathcal{G}_{i} = (V_{i}, E_{i}) \,|\, i \in [1,|V|]\}$, 
a $\mathcal{G}$-partitioning into HNs. Each HN, $\mathcal{G}_{i}$, contains HN-subgraphs, $\mathcal{G}^{k}_{i}$,
all \emph{maximal weakly label-connected} on a label $l \in \Lambda(\mathcal{G})$.
A \emph{summarization function} $\chi_{\Lambda}: V \rightarrow \mathbb{N}$ is a function 
assigning to each vertex, $v$, a unique HN identifier $\chi_{\Lambda}(v) \in [1,k]$.
$\chi_{\Lambda}$ is \emph{valid} if for any $v_1, v_2$, where $\chi_{\Lambda}(v_1) = \chi_{\Lambda}(v_2)$, either $v_1, v_2$ belong to :\\
{\noindent\textbf{Case 1} the same HN-subgraph, $\mathcal{G}^{k}_{i}$, that is maximal weak label-connected on $l$, or to}\\
{\noindent\textbf{Case 2} different HN-subgraphs, $\mathcal{G}^{k_1}_{i}$, $\mathcal{G}^{k_2}_{i}$, each maximal label-connected on $l$ and 
not connected by an $l$-labeled edge in $\mathcal{G}$.}
\end{definition}

\begin{theorem}[Optimal Summarization NP-completeness]
Let \textbf{MinSummary} be the problem that, given a graph $\mathcal{G}$ and an integer $k' \geq 2$,
decides whether there exists a label-driven partitioning $\Phi$ of $\mathcal{G}$ with $|\Phi| \leq k'$,
such that $\chi_{\Lambda}$ is a \emph{valid summarization}. Then, \textbf{MinSummary} is NP-complete, even for undirected
graphs, $|\Lambda(\mathcal{G})| \leq 2$ and $k'=2$.
\begin{proof} We establish the result in two steps.\\
{\noindent\textbf{MinSummary} is in NP. We construct a valid \emph{summarization function}, $\chi_{\Lambda}$, as a witness.
 For a graph partitioning in $k$ \emph{subgraphs}, one can verify in polynomial time (see \cite{JinCLR10}) if two vertices 
 are reachable by a given labeled-constrained
 path and decide if their assignation to the same or to different HNs is valid (Def. \ref{fsummary}).}\\
{\noindent\textbf{MinSummary} is NP-hard. Let us henceforth reduce the \textbf{MinSummary} problem to \textbf{IndSet}, i.e., 
 the NP-complete (see \cite{GareyJ79}) problem of establishing whether an undirected graph contains $K$ independent vertices, for an arbitrary $K$.} 
{\noindent We prove \textbf{IndSet} $\leq_{p}$ \textbf{MinSummary}. Let $\mathcal{G} = (V,E)$ be an \textbf{IndSet} instance, 
where $\mathcal{G}$ is undirected, $|V| = n \geq 2$, $|E| = m$, 
 $\Lambda(\mathcal{G}) = \{{\color{blue}l_1}\}$. We consider a polynomial reduction function, $f$, such that 
 $f(\mathcal{G}) = \mathcal{G}'$, $\mathcal{G}' = (V', E')$ (see Fig. \ref{gadget}), $\{v'_1, v'_2, v'_3\} \subset V'$, 
 $\Lambda(\mathcal{G}) = \{{\color{blue}l_1}, {\color{red}l_2}\}$, and $\tilde{\mathcal{G}} \subset \mathcal{G}$, where
 $\tilde{\mathcal{G}}$ is obtained from $\mathcal{G}$, by adding, 
 between each pair of vertices connected with an ${\color{blue}l_1}$-labeled edge, $n$ more ${\color{blue}l_1}$-labeled edges.
 Let $\mathcal{G}'$ contain three paths of length $k$, between $v'_1$ and $v'_2$ (one, ${\color{blue}l_1}$-labeled,
 and two, ${\color{red}l_2}$-labeled) and two paths of length $n$, between $v'_2$ and $v'_3$, of each color. 
 Let $K \geq 0$ be the number of independent vertices in $\mathcal{G}$. In $\mathcal{G}'$,
 $\#{\color{blue}l_1} \geq (n+1)(n-K-1) + 2k + n$ and $\#{\color{red}l_2} = 2n + k$. 
 ${\color{red}l_2}=\max\limits_{l\in \mathcal{G}'}(\#l) \Rightarrow K \geq \frac{n^{2}-n-1+k}{n+1} \geq 1$. 
 \begin{figure}[h!]
 \begin{center}
\begin{tikzpicture}[scale=1.6, every node/.style={scale=0.5}]
 \draw[] (1,1) circle (0.5cm);
    \node[] (N1)  at (1,1.4) {$\times$};
    \node[] (N2)  at (1,1.2)   {$\times$};
    \node[] (N3)  at (1,1.05) {$\vdots$};
    \node[] (N4)  at (1,0.8) {$\times$};
    \node[] (N5)  at (1,0.6) {$\times$};
    \node[scale=2] (N6)  at (1.8,1) {$v'_1$};
    \node[scale=2] (N7)  at (2.5,1) {$v'_2$};
    \node[scale=2] (N8)  at (3,1) {$v'_3$};
    \node[scale=2] (N9)  at (2.1,0.7) {$\stackrel{\underbrace{}}{A}$};
    \node[scale=2] (N9)  at (2.7,0.7) {$\stackrel{\underbrace{}}{B}$};
    \node[scale=2] (N10)  at (1,0.3) {$\mathbf{\tilde{\mathcal{G}}}$};
    \node[scale=2] (N11)  at (0.2,1.05) {$\mathbf{\mathcal{G}':}$};
    \path 
    (N1) edge[->, color=red] node {} (N6)
    (N2) edge[->, color=red] node {} (N6)
    (N4) edge[->, color=red] node {} (N6)
    (N5) edge[->, color=red] node {} (N6)
    (N6) edge[->, bend left=30, color=red] node[above,scale=2]{${\color{black}k}$} (N7)
    (N6) edge[->, bend right=30, color=blue] node []{} (N7)
    (N6) edge[->, color=blue] node [] {} (N7)
    (N7) edge[->, bend left=30, color=red] node [above,scale=2] {${\color{black}n}$} (N8)
    (N7) edge[->, color=blue] node {} (N8)
    ;
\end{tikzpicture}
\vspace{-3mm}
\end{center}
\caption{$\mathcal{G}'$ Construction}
\label{gadget}
\end{figure}
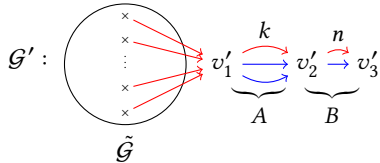
 We show: $\mathcal{G}$ satisfies \textbf{IndSet} $\Leftrightarrow$ $\mathcal{G}'$ satisfies \textbf{MinSummary}.\\     
{\noindent \textbf{$\Rightarrow$} Let $\mathcal{G}$ satisfy \textbf{IndSet}. We can thus choose a set of independent vertices $S \subset V$, $|S| = k$. 
    Let $\mathcal{G}_2$ be the $\mathcal{G}'$-subgraph induced by $S \cup A \cup B$.
    It is \emph{maximal weakly label-connected} on ${\color{red}l_2}$ and contains $2k + n$ edges labeled ${\color{red}l_2}$ and $2k+n$ edges labeled ${\color{blue}l_1}$, i.e.,
    $\lambda(\mathcal{G}_2) = {\color{red}l_2}$. 
    Let $\mathcal{G}_1$ be the $\mathcal{G}'$-subgraph induced by $V \setminus S$. It is \emph{maximal weakly label-connected}
    on ${\color{blue}l_1}$ and contains $(n+1)m$ edges, all labeled ${\color{blue}l_1}$; hence, $\lambda(\mathcal{G}_1) = {\color{blue}l_1}$. 
    $\Phi = \{\mathcal{G}_1, \mathcal{G}_2\}$ is a valid summarization of $\mathcal{G}'$,
    as ${\color{blue}l_1}=\max\limits_{l\in \mathcal{G}_1}(\#l)$ and ${\color{red}l_2}=\max\limits_{l\in \mathcal{G}_2}(\#l)$.
    $\mathcal{G'}$ satisfies \textbf{MinSummary}.} \\
 {\noindent \textbf{$\Leftarrow$} Let $\mathcal{G}'$ satisfy \textbf{MinSummary}. We can thus compute a $\mathcal{G}$-partitioning, $\Phi$,
  that is a \emph{valid summarization}, where $|\Phi| \leq 2$.
  If $\Phi = 2$, then there exist two distinct $\mathcal{G}'$-subgraphs, $\mathcal{G}_1$, $\mathcal{G}_2$, where $\Phi = \{\mathcal{G}_1, \mathcal{G}_2\}$. 
  As $\#{\color{blue}l_1}=(n+1)m + 2k + n \geq 2n + k = \#{\color{red}l_2}$ in $\mathcal{G}'$, one of the subgraphs $\mathcal{G}_1$, $\mathcal{G}_2$, should
  be such that all of its components are \emph{maximal weakly label-connected} on ${\color{blue}l_1}$. Let that subgraph be $\mathcal{G}_1$.
  Hence, $\mathcal{G}_1 \cap \tilde{\mathcal{G}}$ contains all vertices connected by a ${\color{blue}l_1}$-labeled edge.
  Let us denote by $\tilde{V_1}$ the set of vertices in $\mathcal{G}_1 \cap \tilde{\mathcal{G}}$. The set of vertices in $\mathcal{G}_1$
  is thus $\tilde{V_1} \cup A \cup B$. As $\Phi$ has to be a valid summarization, the set of vertices in $\mathcal{G}_2$ is $V_2$, where
  $V_2 = V' \setminus (\tilde{V_1} \cup A \cup B)$. We can thus choose the set of independent vertices of size $K$ in $\mathcal{G}$
  to be $S = V_2$. 
  If $|\Phi|=1$, $\Phi = \{\mathcal{G}'\}$ must be a \emph{valid summarization} of
  $\mathcal{G}'$. As $\mathcal{G}'$ is \emph{maximal weakly connected} on ${\color{red}l_2}$, it must hold that
  ${\color{red}l_2}=\max\limits_{l\in \mathcal{G}'}(\#l)$. Hence, 
  $K \geq 1$ and we can choose the set of independent vertices in $\mathcal{G}$ to be $S = V' \cap V$.} 
  Thus, $\mathcal{G}$ satisfies \textbf{IndSet}.} 
\end{proof}
\end{theorem}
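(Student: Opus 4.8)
The plan is to follow the standard template for an NP-completeness proof: first place \textbf{MinSummary} in NP, then exhibit a polynomial-time many-one reduction from a canonical NP-hard problem, and finally verify the reduction's correctness in both directions. Since the gadget I build will use exactly two labels and the target bound $k'=2$, the same argument will automatically deliver the stronger ``even for'' claim about undirected graphs with $|\Lambda(\mathcal{G})| \le 2$ and $k'=2$.

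For membership in NP, I would take the summarization function $\chi_{\Lambda}$ itself as the certificate: it has size $O(|V|)$ and is therefore polynomially bounded. Verifying a candidate $\chi_{\Lambda}$ reduces to (i) checking that it uses at most $k'$ distinct identifiers, and (ii) checking, for every pair $v_1, v_2$ with $\chi_{\Lambda}(v_1) = \chi_{\Lambda}(v_2)$, that the validity conditions of Definition \ref{fsummary} hold. The latter amounts to deciding label-constrained reachability (whether an $l^{+}$-labeled path joins two vertices), which is solvable in polynomial time by standard labeled-reachability indexing; hence the whole check runs in polynomial time.

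For NP-hardness I would reduce \emph{from} \textbf{IndSet}, i.e.\ establish $\textbf{IndSet} \le_{p} \textbf{MinSummary}$, starting from an undirected graph $\mathcal{G} = (V,E)$ with $|V| = n$ and a single edge label ${\color{blue}l_1}$. The heart of the argument is the gadget $\mathcal{G}'$ of Figure \ref{gadget}: I first amplify edge multiplicities, replacing every ${\color{blue}l_1}$-edge of $\mathcal{G}$ by $n$ additional parallel ${\color{blue}l_1}$-edges to form $\tilde{\mathcal{G}}$, and then attach a second-label anchor consisting of the vertices $v'_1, v'_2, v'_3$ together with the ${\color{red}l_2}$-paths (and a parallel ${\color{blue}l_1}$-path) whose lengths $k$ and $n$ are chosen so that ${\color{red}l_2}$ can become the majority label of exactly one part. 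This construction is clearly polynomial in $n$. The design goal is that, because $\#{\color{blue}l_1}$ dominates $\#{\color{red}l_2}$ globally, any valid partition with $|\Phi| \le 2$ is forced to have one part whose majority label is ${\color{blue}l_1}$ --- absorbing all the amplified edges, hence every vertex incident to an ${\color{blue}l_1}$-edge --- and a complementary part whose majority label is ${\color{red}l_2}$, while the maximal weak label-connectivity requirement on the red part makes the vertices it may legally contain precisely an independent set of $\mathcal{G}$ together with the fixed anchor vertices $A$ and $B$.

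Correctness then splits into two implications. For $(\Rightarrow)$, given an independent set $S$ with $|S| = k$, I would take $\mathcal{G}_2$ to be the subgraph induced by $S \cup A \cup B$ and $\mathcal{G}_1$ the subgraph on the remaining vertices; independence of $S$ guarantees $\mathcal{G}_2$ carries more ${\color{red}l_2}$- than ${\color{blue}l_1}$-edges so that $\lambda(\mathcal{G}_2) = {\color{red}l_2}$, while the amplification forces $\lambda(\mathcal{G}_1) = {\color{blue}l_1}$, making $\Phi = \{\mathcal{G}_1, \mathcal{G}_2\}$ a valid summarization. For $(\Leftarrow)$, from a valid summarization with $|\Phi| \le 2$ I would use the global inequality $\#{\color{blue}l_1} \ge \#{\color{red}l_2}$ to pin down the ${\color{blue}l_1}$-dominated part $\mathcal{G}_1$, argue it must contain every $\tilde{\mathcal{G}}$-vertex, and recover the independent set as the vertices left in the complementary part. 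The main obstacle, where I would spend the most care, is tuning the multiplicities and path lengths so that this majority-label arithmetic works out exactly: I must verify the counting bounds $\#{\color{blue}l_1} \ge (n+1)(n-K-1) + 2k + n$ and $\#{\color{red}l_2} = 2n + k$, together with the resulting threshold $K \ge \frac{n^2 - n - 1 + k}{n+1} \ge 1$, so that the existence of a red-majority part becomes equivalent to the existence of an independent set of the target size, and simultaneously confirm that the maximal-weak-label-connectivity and ``not connected by an $l$-labeled edge'' clauses of Definition \ref{fsummary} are respected throughout both directions.
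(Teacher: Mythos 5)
Your proposal is correct and follows essentially the same route as the paper's own proof: NP membership via the summarization function as a polynomially verifiable certificate, and NP-hardness via the identical reduction $\textbf{IndSet} \leq_{p} \textbf{MinSummary}$, using the same gadget (edge amplification to form $\tilde{\mathcal{G}}$, the anchor vertices $v'_1, v'_2, v'_3$ with ${\color{red}l_2}$-paths of lengths $k$ and $n$), the same counting bounds, and the same two-directional correctness argument partitioning into $S \cup A \cup B$ versus the rest.
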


\subsection{Visualizing Graph Summaries}\label{sec:vizg}
Since a node link visualization for the GRASP summary
(as shown in Figure \ref{fig:node-link-treemap} on an example) is not suitable to quickly
understand the label distributions on the 
graph summary, we adopt
a treemap visualization. This enables users to explore the graph summary, inspect the label distributions, and 
better comprehend the graph summary's compression factor and topology. Each
cell in the treemaps of Figure \ref{linked-treemap} (depicting a 5K
LDBC social network) encodes an hypernode (HN) in the graph summary. 
 The treemaps are further enriched
with links, in order to represent the underlying HEs connecting HNs. 
The full list of HEs, each of which is
color-coded, in order to distinguish its label (as shown in the legend), is also reported. 
For a given HE, 
depending on the value of EWeight(e,L), we assign a correspondingly
higher/lower thickness to e (as shown for instance by thickier cyan links in Figure
\ref{linked-treemap}). 
Moreover, the links can be added or removed, by selecting and deselecting them in the legend, thus
allowing to obtain different variants of Figure \ref{linked-treemap}, which correspond to going from 
the right to the left-hand side. 
The linked treemaps are used to let the user better understand the graph summaries and 
to explain the results of the corresponding GRASP technique. In the next section, we discuss their usage during AQP and refer the reader to Section 5, for an empirical evaluation of their utility.  
 
%

\begin{figure*}[t!]
\centering
  \includegraphics[width=\textwidth]{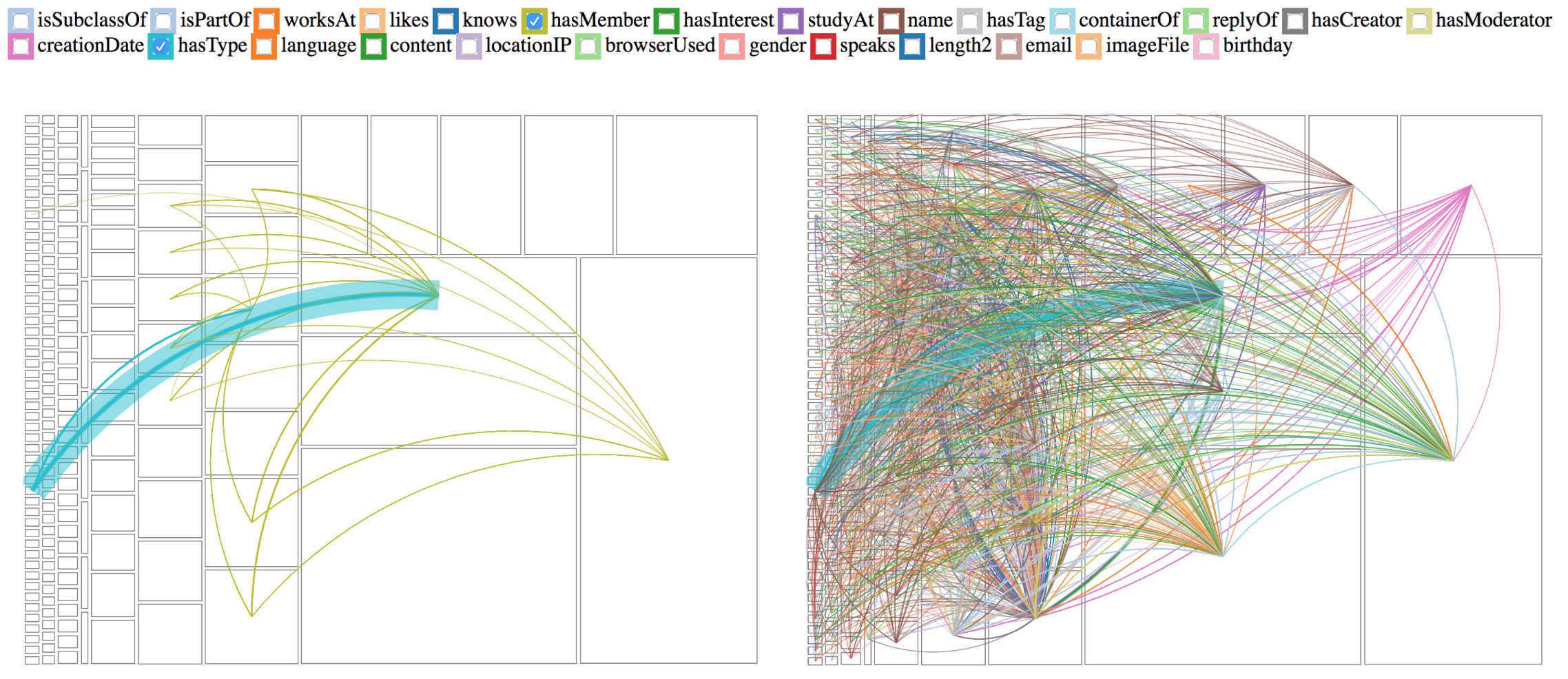}
  \caption{Coarse-grained Linked Treemap for the 5K LDBC Social Network.} 
    \label{linked-treemap}
\end{figure*}


\section{Approximate Query Processing}\label{sec:aqp}
\subsection{Query Translations}\label{sec:qtrans}

For an input graph $\mathcal{G}$ and a counting reachability query $Q$, we aim to approximate the result  
$\llbracket Q \rrbracket_\mathcal{G}$ of evaluating $Q$ over $\mathcal{G}$. Hence, we translate $Q$
into a query $Q^{T}$ to be evaluated over the summarization $\hat{\mathcal{G}}$ of $\mathcal{G}$, such that
$\llbracket Q^{T} \rrbracket_{\hat{\mathcal{G}}} \approx \llbracket Q \rrbracket_{\mathcal{G}}$. The translations 
for each input query type are given in Figure \ref{ref:transl}, using PGQL \cite{RestHKMC16} as a concrete syntax. 
We discuss each query class next. 

{\noindent \textbf{Simple and Optional Label} ($Q_L, Q_O$)} \label{c1}There are two configurations in which the label $l$ can occur in $\hat{\mathcal{G}}$: either inside of
 a HN or on a cross-edge. In the first case, we cumulate the number of $l$-labeled HN inner-edges; in the second,
 we cumulate the $l$-labeled cross-edge weights. To account for the potential absence of $l$, in the optional-label query $Q_2$,
 we additionally estimate the number of nodes in $\mathcal{G'}$, by cumulating the number of nodes in each HN.} 

 {\noindent \textbf{Kleene Plus and Kleene Star} ($Q_P, Q_S$)} \label{c2} To estimate $l^{+}$, we cumulate the counts inside hypernodes containing $l$-labeled inner-edges and, as in (\ref{c1}), the weights on 
 $l$-labeled cross-edges. For the first part, we use the statistics gathered during the \emph{evaluation phase} (Sec. \ref{sec:evalphase}).
 We distinguish three scenarios, depending on whether the $l_{+}$ reachability is due to: 1) inner-edge connectivity -- in which case 
 we use the corresponding property counting the inner $l$-paths; 2) incoming cross-edges -- hence, we cumulate the $l$-labeled in-degrees 
 of HN vertices; or 3) outgoing cross-edges -- in which case we cumulate on the number of outgoing $l$-paths. To handle the $\epsilon$-label 
 in $l^{*}$, we use the same formula as in (\ref{c1}) to additionally estimate the number of nodes in $\hat{\mathcal{G}}$.} 

{\noindent \textbf{Disjunction} ($Q_D$)} \label{c3} As in (\ref{c1}), we treat each configuration, considering both labels. In the first case, we cumulate the number of
 $l_1$ or $l_2$-labeled HN inner-edges; in the second, we cumulate over the cross-edge weights with either label.} 

{\noindent \textbf{Binary Conjunction ($Q_C$)} \label{c4} We consider all cases, depending on whether: 1) the 
label concatenation $\l_1 \cdot l_2$ appears on a path \emph{inside} of a HN, 2) one of the labels $l_1, l_2$ occurs on a 
HN inner-edge and the other, as a cross-edge, or 3) both labels occur on cross-edges.}

 \hspace{-1.5em}
\begin{figure}[h!]\label{ref:qtrans}
\setlength\tabcolsep{1.5pt}
\begin{scriptsize}
\begin{tabular}{|c| @{\hspace{-6\tabcolsep}} l|} \hline
 $Q_L(l)$ & 
  \begin{lstlisting}
    SELECT COUNT(*) MATCH () -[:l]-> ()
  \end{lstlisting} \\ \hline
 $Q^{T}_L(l)$ & 
  \begin{lstlisting}
    SELECT SUM(x.LPERCENT_L * x.EWEIGHT) MATCH (x)    
    SELECT SUM(e.EWEIGHT) MATCH () -[e:l]-> ()
  \end{lstlisting} \\ \hlinewd{1pt}  
 $Q_O(l)$ & 
  \begin{lstlisting}
    SELECT COUNT(*) MATCH () -[:l?]-> ()
  \end{lstlisting} \\ \hline
 $Q^{T}_O$ &   
  \begin{lstlisting}
    SELECT SUM(x.LPERCENT_L * x.EWEIGHT) MATCH (x)
    SELECT SUM(e.EWEIGHT) MATCH () -[e:l]-> ()
    SELECT SUM(x.AVG_SN_VWEIGHT * x.VWEIGHT) MATCH (x) 
  \end{lstlisting} \\ \hlinewd{1pt} 
  
 $Q_P(l)$ & 
  \begin{lstlisting}
    SELECT COUNT(*) MATCH () -/:l+/-> ()
  \end{lstlisting} \\ \hline
 $Q^{T}_P(l)$ & 
  \begin{lstlisting}
    SELECT SUM(x.LREACH_L) MATCH (x) 
    WHERE x.LREACH_L > 0    
    SELECT SUM(e.EWEIGHT) MATCH () -[e:l]-> ()
  \end{lstlisting} \\ \hlinewd{1pt} 
 $Q_S(l)$ &   
  \begin{lstlisting}
    SELECT COUNT(*) MATCH () -/:l*/-> ()
  \end{lstlisting}  \\ \hline
 $Q^{T}_S(l)$ &  
  \begin{lstlisting}
    SELECT SUM(x.LREACH_L) MATCH (x) 
    WHERE x.LREACH_L > 0    
    SELECT SUM(e.EWEIGHT) MATCH () -[e:l]-> ()    
    SELECT SUM(x.AVG_SN_VWEIGHT * x.VWEIGHT) MATCH (x) 
  \end{lstlisting} 
 \\ \hlinewd{1pt}

 $Q_D(l_1,l_2)$ & 
  \begin{lstlisting}
    SELECT COUNT(*) MATCH () -[:l1|l2]-> ()
  \end{lstlisting} \\ \hline
 $Q^{T}_D(l_1, l_2)$ &   
  \begin{lstlisting}
    SELECT SUM(x.LPERCENT_L1 * x.EWEIGHT + 
                x.LPERCENT_L2 * x.EWEIGHT) MATCH (x)
    SELECT SUM(e.EWEIGHT) MATCH () -[e:l1|l2]-> ()
  \end{lstlisting} \\ \hlinewd{1pt}
  $Q_C(l_1,l_2,1,1)$ & 
  \begin{lstlisting}
    SELECT COUNT(*) MATCH () -[:l1]-> () <-[:l2]- ()  
  \end{lstlisting} \\ \hline
  $Q_C(l_1,l_2,1,2)$ & 
  \begin{lstlisting}
    SELECT COUNT(*) MATCH () -[:l1]-> () -[:l2]-> ()
  \end{lstlisting} \\ \hline
  $Q_C(l_1,l_2,2,1)$ & 
  \begin{lstlisting}
    SELECT COUNT(*) MATCH () <-[:l1]- () <-[:l2]- ()  
  \end{lstlisting} \\ \hline
  $Q_C(l_1,l_2,2,2)$ &     
  \begin{lstlisting}
    SELECT COUNT(*) MATCH () <-[:l1]- () -[:l2]-> ()
  \end{lstlisting} \\ \hline
  $Q^{T}_C(l_1,l_2,d_1,d_2)$ &   
  \begin{lstlisting}
    SELECT SUM((x.RLPART_L2_L1_D2_D1 * e.EWEIGHT)/
	    (x.LPERCENT_L1 * x.VWEIGHT))
    MATCH (x) -[e:l2] -> ()
    WHERE x.LPERCENT_L1 > 0   
    SELECT SUM((y.RLPART_L1_L2_D1_D2 * e.EWEIGHT)/
	    (y.LPERCENT_L2 * y.VWEIGHT))
    MATCH () -[e:l1] -> (y)
    WHERE y.LPERCENT_L2 >0   
    SELECT SUM(x.EWEIGHT * 
	    min(x.LPERCENT_L1, x.LPERCENT_L2))
    MATCH (x)    
    SELECT SUM(x.EREACH_L1_L2_D1_D2)  MATCH (x)
  \end{lstlisting} \\ \hlinewd{1pt}  
\end{tabular}
\end{scriptsize}
\caption{Query translations onto the graph summary. }
\label{ref:transl}
\end{figure}

\vspace{-0.2cm}

\begin{example}[Approximating Brand Reach Estimates]
Revisiting Example \ref{ref:ads}, we evaluate the AQP-translation in Table \ref{ref:transl} over the 
GRASP summary $\hat{\mathcal{G}} = (\hat{V}, \hat{E})$ in Fig. \ref{fig:M3}, as follows: 

{\noindent
{\small
 $\llbracket Q_1 \rrbracket_{\hat{\mathcal{G}}} = Q_L^{T}({\color{red}l_5}) = 
 \sum\limits_{\hat{v} \in \hat{V}} EWeight(\hat{v}, {\color{red}l_5}) * LPercent(\hat{v}, {\color{red}l_5})$.\\
 Hence, $\llbracket Q_1 \rrbracket_{\hat{\mathcal{G}}} = EWeight(\text{HN}_2, {\color{red}l_5}) * LPercent(\text{HN}_2, {\color{red}l_5}) = 7$ \\
 $\llbracket Q_2 \rrbracket_{\hat{\mathcal{G}}} = Q_O^{T}({\color{green}l_2}) = Q_L^{T}({\color{green}l_2}) +  
  \sum\limits_{\hat{v} \in \hat{V}} AvgSNVWeight(\hat{v}) * VWeight(\hat{v})$. Hence, $\llbracket Q_2 \rrbracket_{\hat{\mathcal{G}}} =  Q_L^{T}({\color{green}l_2}) = 27$\\
 $\llbracket Q_3 \rrbracket_{\hat{\mathcal{G}}} = Q_P^{T}(l_0) = \sum\limits_{\hat{v} \in \hat{V}} LReach(\hat{v}, l_0) + \sum\limits_{\hat{e} \in \hat{E}} EWeight(\hat{e}, l_0)$\\
 Hence, $\llbracket Q_3 \rrbracket_{\hat{\mathcal{G}}} = \sum\limits_{\hat{v} \in \hat{V}} LReach(\hat{v}, l_0) = 15$. \\
 $\llbracket Q_4 \rrbracket_{\hat{\mathcal{G}}} = Q_S^{T}(l_0) = Q_P^{T}(l_0) + 
  \sum\limits_{\hat{v} \in \hat{V}} AvgSNVWeight(\hat{v}) * VWeight(\hat{v})$.\\ Hence, $\llbracket Q_4 \rrbracket_{\hat{\mathcal{G}}} = 40$. \\
 $\llbracket Q_5 \rrbracket_{\hat{\mathcal{G}}} = Q_D^{T}({\color{orange}l_4}, {\color{blue}l_1}) = 
  Q_L^T ({\color{orange}l_4}) + Q_L^T ({\color{blue}l_1}) = 14$. \\
 $\llbracket Q_6 \rrbracket_{\hat{\mathcal{G}}} = Q_C^{T}({\color{orange}l_4}, {\color{red}l_5}) = 7$.
}
}
\end{example}

Next, we will empirically study the error bounds on various datasets with queries translated according to Figure \ref{ref:transl}.

\subsection{AQP and Visualization}\label{sec:vaqp}
We now discuss the utility of linked treemaps for approximate query
processing. Remember from Section 3.5 that we can filter the links in the
overlay, by using the checkboxes in the legend above Figure
\ref{linked-treemap}. This functionality lets the user understand the
scope of their queries in workload and whether these are restricted to 
one region or to a few hypernodes and/or supernodes in the linked treemaps. 
Narrowing down the graph summary to only cover these queries, allows to further
enhance the efficiency of approximate query evaluation, as shown in Section \ref{sec:exp}.  

\section{Experimental Analysis}\label{sec:exp}
In this section, we present our extensive empirical evaluation recording 
(1) the succinctness of our GRASP summaries and the efficiency of our graph summarization algorithm; 
(2) the suitability of GRASP summaries for approximate evaluation of counting label-constrained reachability queries; 
and (3) the utility of graph visualization in driving the approximate query processing towards certain graph regions, as highlighted in the treemaps. 

\noindent {\bf Setup, Datasets and Implementation.}
Both GRASP and the VAGQP engine are implemented in Java using OpenJDK 1.8. Note that the engine
makes query calls (in PGQL) to Oracle Labs PGX 2.7, as the underlying graph database backend. 
Since the available version of PGX works on homogeneous graphs, rather than
on heterogeneous ones, we padded each node in the graph summary with the
same properties as in the other nodes.  
For the intermediate graph analysis operations (e.g.,
weakly connected components), we used Green-Marl, a domain-specific
language adapted for these tasks, and modified the methods to fit
with the constructions of node properties, as required by our graph
summarization algorithm. Finally, the visualization overlays have been
implemented using D3js.
We base our analysis on four graph datasets (see Figure \ref{fig:gdata}), 
encoding: a Bibliographic network (\emph{bib}), 
the LDBC social network schema \cite{ErlingALCGPPB15} (\emph{social}),
Uniprot knowledge graphs (\emph{uniprot}), and
the WatDiv schema \cite{AlucHOD14} (\emph{shop}).
We obtained these using gMark \cite{BaganBCFLA17}, a synthetic graph instance and query workload generator. 
As gMark tries to construct the instance that best fits the given size parameter and schema constraints, we 
notice variations in the resulting sizes (especially for the very dense graphs \emph{social} and \emph{shop}).
Next, on the same datasets, we generated workloads of varying sizes, for each type in Section
\ref{sec:prelim}, i.e., \emph{single-label, Kleene-star, transitive closure ($+$), unions, and concatenation
queries}. 
Recent studies \cite{BonifatiMT17,Malyshev18} have shown that practical graph pattern queries
formulated by users and in online query endpoints are often small: $56.5\%$
of real-life SPARQL queries consist of a
single edge (RDF triple), whereas $90.8\%$ use 6 edges at most.
Hence, we select small-sized template queries with frequently occuring topologies, such as chains \cite{BonifatiMT17},
and formulate them on our datasets, for workloads of $\sim$ 600 queries.

\begin{figure*}[t!]\centering
\scriptsize
\makebox[0pt]{
\subfloat{
\begin{tabular}{|c|c|c|c|c|c|c|c|c|c|c|c|c|c|c|} \hline
 \multirow{2}{*}{Dataset} & \multirow{2}{*}{$|L_V|$} & \multirow{2}{*}{$|L_E|$} & \multicolumn{2}{|c|}{$\sim 1K$} &  \multicolumn{2}{|c|}{$\sim 5K$}  &  \multicolumn{2}{|c|}{$\sim 25K$} &  \multicolumn{2}{|c|}{$\sim 50K$}  &  \multicolumn{2}{|c|}{$\sim 100K$} &  \multicolumn{2}{|c|}{$\sim 200K$}  \\ 
 \cline{4-15}
 & & & $|V|$ & $|E|$ & $|V|$ & $|E|$ & $|V|$ & $|E|$ & $|V|$ & $|E|$ & $|V|$ & $|E|$ & $|V|$ & $|E|$ \\ \hline
 \multicolumn{1}{|l|}{\emph{bib}}  & $5$  & $4$  & 916 & 1304 & 4565 & 6140 & 22780 & 3159 & 44658 & 60300 & 88879 & 119575 & 179356 & 240052\\\hline
 \multicolumn{1}{|l|}{\emph{social}} & $15$ & $27$ & 897 & 2127 & 4434 & 10896 & 22252 & 55760 & 44390 & 110665 & 88715 & 223376 & 177301 & 450087\\\hlinewd{1.1pt}
 \multicolumn{1}{|l|}{\emph{uniprot}}& $5$  & $7$  & 2170 & 3898 & 6837 & 18899 & 25800 & 97059& 47874 & 192574 & 91600 & 386810 & 177799 & 773082\\\hline
 \multicolumn{1}{|l|}{\emph{shop}}   & $24$ & $82$ & 3136 & 4318 & 6605 & 10811 & 17893 & 34052 & 31181 & 56443 & 57131 & 93780 & 109205 & 168934\\\hline
\end{tabular}}}
\caption{gMark-Generated Dataset Characteristics: number of vertices $|V|$, edges $|E|$, vertex $|L_V|$ and edge labels $|L_E|$.}
\label{fig:gdata}
\end{figure*}


Experiments were executed on a cloud VM with Intel Xeon E312xx (4 cores) 1.80 GHz CPU,
128GB RAM, and Ubuntu 16.04.4 64-bit.
Each data point was obtained by running an experiment $6$ times and
removing the first value from the average computation.  

\noindent {\bf Compression Ratios of GRASP summaries.}
First, we evaluate the effect of using the two heuristics (source-merge and
target-merge) in the construction of the GRASP summaries. We measure the
compression ratio $CR$ obtained on the vertices and edges of the original graph
(by using $(1- |\hat{\mathcal{V}}|/|\mathcal{V}|) * 100$ and $(1-
|\hat{\mathcal{E}|}/|\mathcal{E}|) * 100$, respectively for the CR
vertices and edges), along with the \emph{summary construction time} (SCT). 
Recall that our graph summaries are encoded using the property graph data model and, as such,
they possess node and edge properties. 

Next, we discuss the results for source merge and then compare them with those for target merge. 
In Figures \ref{fig:grasp_exp} (a) and (b), we can observe that the most homogeneous
datasets, (bib) and (uniprot), achieve very high CR (close to $100\%$) and steadily maintain it with varying graph sizes. 
As far as heterogeneity significantly grows for (shop)
and (social), the CR becomes eagerly sensitive to the dataset size, starting with
low values, for smaller graphs, and achieving a
plateau between $85\%$ and $90\%$, for larger ones.
Consequently, our GRASP algorithm enables us to obtain compact summaries for large, highly heterogeneous datasets. 
Notice also that the most heterogeneous datasets, (shop) and (social), although close
to each other, display a symmetric behavior for the CRs of vertices and
edges: the former better compresses vertices, while the latter, edges.
Concerning the SCT runtime in Figure \ref{fig:grasp_exp} (c), all
datasets keep a reasonable performance for larger sizes, including the most heterogeneous one
(shop). The runtime is, in fact, not
affected by the heterogeneity degree, but is rather sensitive, for larger sizes,
to variations in $|E|$ (up to $450K$ and $773K$, for uniprot and social). 

\begin{figure}[h!]
  \begin{center}

        \includegraphics[width=0.95\linewidth]{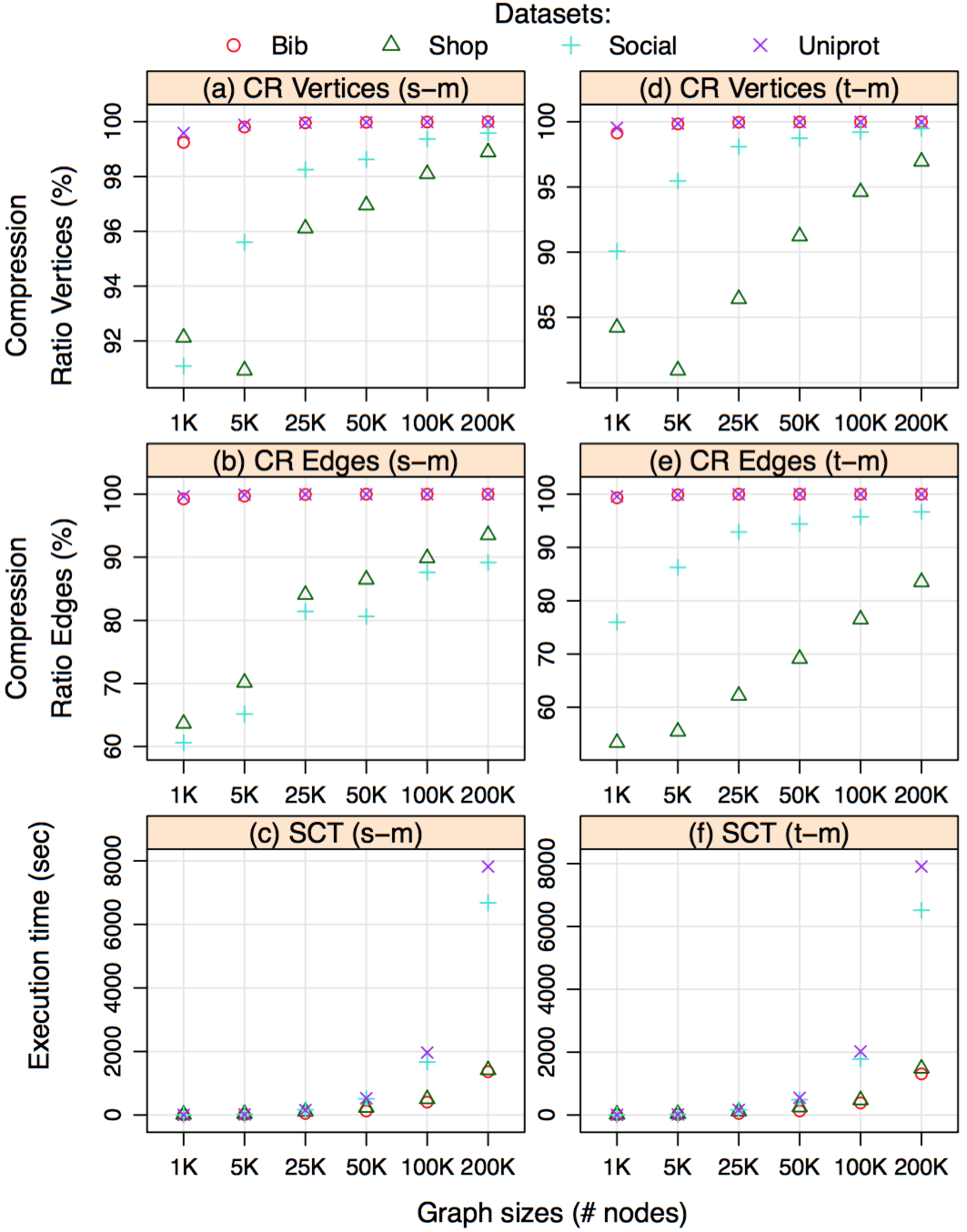} \\
 \caption{Compression Ratios for vertices and edges along with SCT runtime for various sizes of graph datasets for both source-merge (a) to (c) and target-merge (d) to (f).}
    \label{fig:grasp_exp}

  \end{center}
\end{figure}
%
%

We now contrast the source-merge (s-m) and target-merge (t-m) heuristics, the latter being reported in Figure \ref{fig:grasp_exp} (d-e-f). We observe that, while the SCT runtime is quite similar for the two, target-merge achieves better CRs for the social network dataset. 
Overall, the dataset with the worst CR across the two heuristics is shop, which has the lowest CR for smaller sizes. 
This is also due to the high number of labels in the initial shop instances, and hence, to the high number of properties needed for its summary, compared to the other tested datasets: on average, across all considered sizes, 
its summary requires 62.33 properties, against 
17.67 for the social graph one, 
10.0, for bib, and
14.0, for uniprot. 
Nevertheless, even for shop and especially for large sizes, the CRs are fairly high. 
These experiments show that, despite its high complexity, 
GRASP provides high CRs and low SCT runtimes. 

\noindent {\bf AQP Accuracy on GRASP summaries.}
We measured the \emph{accuracy} and \emph{efficiency} of our VAGQP engine by using
the \emph{relative error} and, respectively, the \emph{time gain} measures. The
relative error (per query $Q_i$) is:  $ 1 -
min(Q_i(\mathcal{G}),Q^T_i(\hat{\mathcal{G}}))/$ 
$max(Q_i(\mathcal{G}),Q^T_i(\hat{\mathcal{G}}))$ (in \%), 
where $Q_i(\mathcal{G})$ is the result of the 
counting query $Q_i$ on the original graph (executed with PGX) and $Q^T_i(\hat{\mathcal{G}})$, that of the translated query $Q^T_i$ 
on the GRASP summary (executed with our engine).
The time gain is: $t_{\mathcal{G}} - t_{\hat{\mathcal{G}}}/
max(t_{\mathcal{G}},t_{\hat{\mathcal{G}}})$ (in \%), where the times $t_{\mathcal{G}}$ and $t_{\hat{\mathcal{G}}}$ are the query
evaluation times of query $Q_i$ on the original graph and on the GRASP
summary, respectively.

For the Disjunction, Kleene-plus, Kleene-star, Optional and Single Label query types, we have generated workloads of different sizes,
bound by the number of labels in each dataset. 
For the concatenation workloads, we considered binary conjunctive queries without disjunction, recursion, or optionality. Note that, currently, 
GRASP summaries do not support compositionality. 


\begin{figure*}[t!]
  \begin{center}
    \makebox[0pt]{
    \subfloat[Average Relative Error/Workload]{
        \label{fig:aqp_error_no_concat}
        \includegraphics[width=0.33\linewidth]{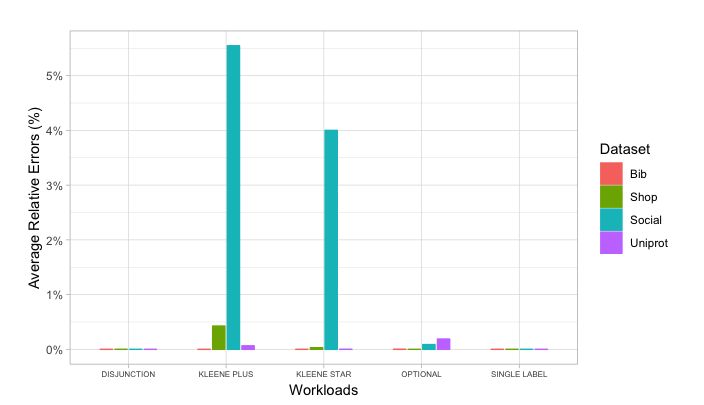}
    }
    \subfloat[Average Time Gain/Workload]{
        \label{fig:aqp_gain_no_concat}
        \includegraphics[width=0.33\linewidth]{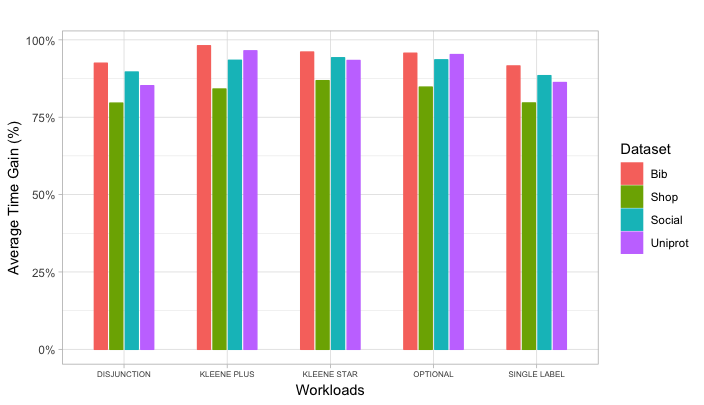}
    }
    \subfloat[Time Gain for Treemap Filtering]{
        \label{fig:runtime_comparison}
        \includegraphics[width=0.33\linewidth]{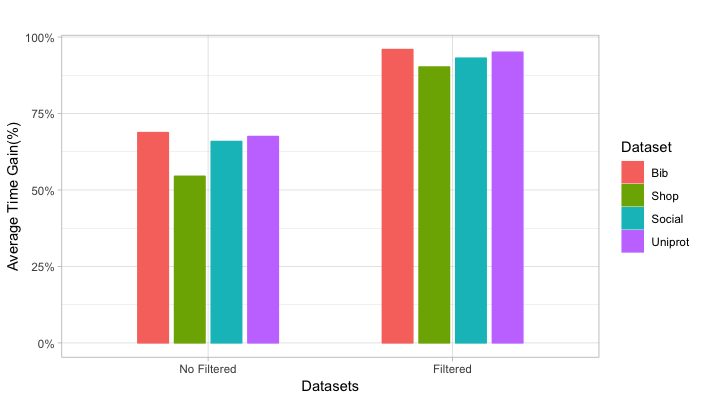}
    }
    }
    \caption{Relative Error (a) and Time Gain (b) per Workload, in each Dataset, for 200K nodes and Single Label, Disjunction, Kleene Star/Plus and Optional Queries, 
    Time Gain (c) for Filtering Coarse-Grained Treemap. }
    \label{fig:aqp_exp_1}

  \end{center}
\end{figure*}

%
%
%
Figure    \ref{fig:aqp_exp_1} (a) and (b) show the relative error and the average time gain
 for Disjunction, Kleene-plus,
Kleene-star, Optional and Single Label workloads. In Figure
  \ref{fig:aqp_exp_1} (a), we can observe that the avg relative error is kept low in all cases and is bounded by $5.5\%$ 
in the case of social dataset's Kleene-plus and Kleene-star workloads. In all the other cases, including Kleene-plus and Kleene-star workloads
for shop dataset, the error is relatively small and near to $0\%$. This experiment confirms the effectiveness of our GRASP summaries for approximate evaluation of graph queries. 
In  Figure  \ref{fig:aqp_exp_1} (b), we studied the efficiency of AQP on GRASP summaries by reporting the time gain (in $\%$) compared with the query evaluation on the original graphs for the four datasets. We can notice a positive time gain (greater or equal to $75\%$) in most cases, but for disjunction. 
Although the AQP relative error is still advantageous for disjunction, disjunctive queries are time-consuming for AQP on GRASP summaries and can take very long for extremely heterogeneous datasets, such as shop (which have the most labels).  
We believe that the performance of these queries will improve with the next PGX version. 

Figures \ref{fig:aqp_exp_2} (a) and (b) show the comparison among the most heterogeneous datasets (shop and social) on workloads of binary conjunctive queries (on a total of 14 queries, 7 per dataset). 
We report the relative error and time gain per query instead of per workload, as before. We can observe in Figure \ref{fig:aqp_exp_2} (a) a relatively small 
error for almost all queries (with an average of 1.6\%), and an upper bound of 8.44\% for query Q5.
Finally, the shop dataset exhibits a relatively small error with an average of 0.14\%. 

Figure \ref{fig:aqp_exp_2} (b) illustrates the fairly high VAGQP time gains for conjunctive queries. 
Prominently, for the social and shop datasets, VAGQP is 81.64\% and 70.95\% faster than query evaluation on the original graph.
This dataset difference is due to the large number of properties and to the high heterogeneity of the latter.


\begin{figure*}[t!]\centering
\scriptsize
\makebox[0pt]{
\subfloat{
\begin{tabular}{|c|c|c|c|c|c|c|c|}\hline
\multirow{2}{*}{ID} & \multirow{2}{*}{Query} & \multicolumn{2}{|c|}{Estimated Answer} &  \multicolumn{2}{|c|}{Relative Error (\%)} &  \multicolumn{2}{|c|}{Runtime (ms)} \\
\cline{3-8}
 & & SumRDF & GRASP &  SumRDF & GRASP & SumRDF & GRASP \\\hline
Q1 & \multicolumn{1}{|l|}{\texttt{SELECT COUNT(*) MATCH (x0)-[:producer]->()<-[:paymentAccepted]-(x1)}} & 75 & 76 & 1.32 & 0.00 & 136.30 & 38.2 \\\hline
Q2 & \multicolumn{1}{|l|}{\texttt{SELECT COUNT(*) MATCH (x0)-[:totalVotes]->()<-[:price]-(x1)}} & 42.4 & 44 & 3.64  & 0.00  & 50.99 & 17 \\\hline
Q3 & \multicolumn{1}{|l|}{\texttt{SELECT COUNT(*) MATCH (x0)-[:jobTitle]->()<-[:keywords]-(x1)}} & 226.7 & 221 & 2.51 & 0.18  & 463.85 & 12.8\\\hline
Q4 & \multicolumn{1}{|l|}{\texttt{SELECT COUNT(*) MATCH (x0)<-[:title]-()-[:performedIn]->(x1)}} & 19.5 & 20 & 2.50 & 0.00 & 831.72 & 8.8 \\\hline
Q5 & \multicolumn{1}{|l|}{\texttt{SELECT COUNT(*) MATCH (x0)-[:artist]->()<-[:employee]-(x1)}} & 143.3 & 133 & 7.19 & 0.37  & 196.77 & 10.6 \\\hline
Q6 & \multicolumn{1}{|l|}{\texttt{SELECT COUNT(*) MATCH (x0)-[:follows]->()<-[:editor]-(x1)}}  & 524 & 528 & 0.38 & 0.48 & 1295.83 & 19\\\hline
\end{tabular}
}}
\caption{SumRDF and GRASP performance: approximate evaluation of binary conjunctive queries on the respective summaries of a shop graph instance
with 31K nodes and 56K edges;
comparing  estimated \emph{cardinality} (number of computed answers), \emph{relative error} w.r.t results computed
on the original graph, and \emph{query runtime}.}
\label{fig:SGExp}
\end{figure*}

\noindent {\bf Baseline for GRASP-based AQP performance.} The closest system to ours is 
SumRDF~\cite{Stefanoni:2018:ECC} (see Section~\ref{sec:relwork}), which, however, operates on a 
\emph{simpler edge-labeled model rather than on property graphs and is tailored for estimating the results of conjunctive queries only}. 
To set a baseline for GRASP-based AQP, we considered the shop dataset in gMark~\cite{BaganBCFLA17}, simulating the WatDiv benchmark~\cite{AlucHOD14}
(also adopted as a benchmark in the SumRDF paper~\cite{Stefanoni:2018:ECC}). From this dataset with 31K nodes and 56K edges, we generated the 
corresponding SumRDF and GRASP summaries.
We noted that GRASP registers a better CR ratio than SumRDF, with \textbf{2737} nodes vs. \textbf{3480} resources and
\textbf{17430} edges vs. \textbf{29621} triples. This comparison is, however, tentative,
as GRASP compresses vertices independently of the edges, while SumRDF returns triples.
We then considered the same type of conjunctive queries analyzed in Figure \ref{fig:aqp_exp_2} and whose syntax is reported in Figure \ref{fig:SGExp}. 
Comparing GRASP vs. SumRDF (see Figure \ref{fig:SGExp}), we recorded an \emph{average relative error} of 
estimation of only \textbf{0.15\%.} vs. \textbf{2.5\%} and an \emph{average query runtime} of only \textbf{27.55 ms} vs. \textbf{427.53 ms}.
The superior accuracy and time performance of GRASP-based AQP show the promise of our approach, motivating our aim to scale 
towards a fully compositional solution.
As SumRDF does not support disjunctions, Kleene-star/plus queries and optional queries, further comparisons were not feasible.

\noindent {\bf Treemap Utility for AQP on GRASP summaries.}
Treemaps allow users to analyze the topology of a GRASP summary, $\hat{\mathcal{G}}$, 
to explore and to navigate its label distribution. 
Next, we highlight a possible use of treemaps to improve query evaluation runtime. Given the translation of a conjunctive query, 
$Q_C^{T}(l_1, l_2, d_1, d_2)$, the coarse-grained view can reveal 
if the labels $l_1$ or $l_2$ are located in a particular region of $\hat{\mathcal{G}}$, as in Figure \ref{linked-treemap} (b), where both labels are concentrated in
the left region of the GRASP summary (each color representing a different label). This information can be useful to filter out the 
graph summary regions that are not inspected by the queries. 
As a test-case for this treemap usage, we have considered queries in 
the first workload (e.g., single label, disjunction, Kleene star, etc.) and showed the runtime improvement 
the coarse-grained treemap refinement. We have observed a non-negligible improvement in all datasets, i.e. 35.73\% for shop, 27.24\% for social,
27.56\% for uniprot and 27.15\% for bib.

\begin{figure}[htp]
  \begin{center}
        \includegraphics[width=0.95\linewidth]{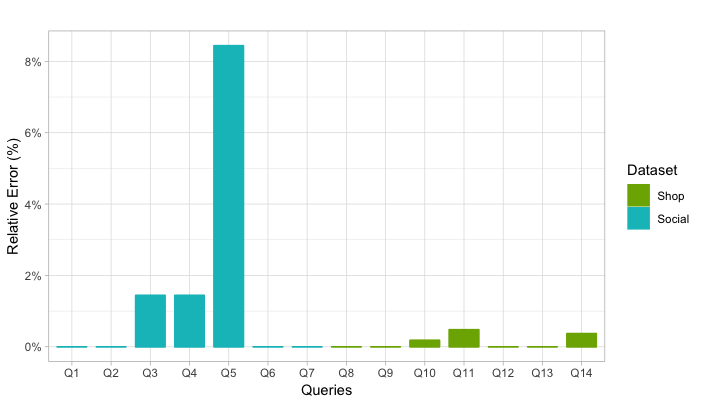} \\
        \includegraphics[width=0.95\linewidth]{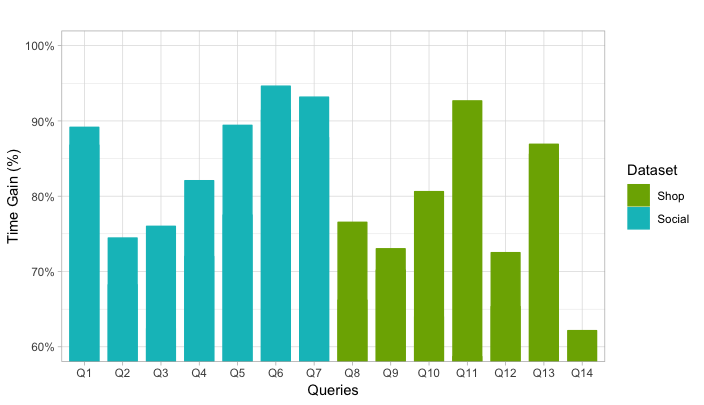}
 \caption{Relative Error (top) and Time Gain (bottom), per Conjunctive Query, in each Dataset, for 50K nodes.}
    \label{fig:aqp_exp_2}

  \end{center}
\end{figure}

\section{Related Work}\label{sec:relwork}
Chaudhuri et al. highlighted in~\cite{ChaudhuriDK17} the two-fold importance of 
AQP in the world of Big Data: (i) to give users agency in deciding
the accuracy vs. efficiency trade-off and (ii) to ensure query-independent accuracy
guarantees. Our work leverages both points, as it targets approximate graph query processing 
and it illustrates how the blend between AQP and graph data
visualization enables advanced graph analytics.

Previous influential work on AQP has focused on relational languages (SQL
and restricted OLAP queries), by embedding samplers directly in the
query language and evaluating them in the query plan~\cite{Acharya2000,
ChaudhuriDK17}.
Relational rows are sampled uniformly-at-random with a
given probability. 
Uniform sampling is widely supported by AQP performing RDBMSs, by 
Big Data systems, e.g., Spark SQL, Oracle, Microsoft's Power BI, SnappyData,
Google's BigQuery, and by online aggregation methods~\cite{Jermaine2008,
Hellerstein1997}. 
Different classes of samplers exist, ranging from the less accurate, uniform ones,
to the distinct and universe ones that work on small
groups and, respectively, on join operators in the query plans. 

Preliminary work on approximate graph analytics in a distributed setting
has recently been pursued in~\cite{IyerPVCASS18}. They rather focus on a graph
sparsification technique and small samples, in order to approximate the
results of specific graph algorithms, such as PageRank and triangle
counting on undirected graphs. 
In contrast, our approach operates in a centralized setting and relies on
query-driven graph summarization for graph
navigational queries with aggregates, while interleaving it with
information visualization techniques devoted to orientate AQP evaluation
towards the regions of the graph pertinent to the queries. 

RDF graph summarization for cardinality  estimation has been tackled in 
recent work~\cite{Stefanoni:2018:ECC}. Their main goal is RDF
query cardinality estimation and  
on a data model that is less expressive than ours
(plain RDF graphs vs. property graphs). 
Hence, the query fragments considered in our and in theirs have limited overlap. As reported in Section~\ref{sec:exp},
our approximate evaluation shows better accuracy and runtime on a set of common queries.   

An algorithm for answering graph reachability queries, using graph simulation based pattern matching, is given
in \cite{FanLMTW11, FanLWW12}, to construct query preserving summaries. This work differs from ours,
as it does not consider property graphs or aggregates.

Aggregation-based graph summarization~\cite{KhanBB17} is at the heart of previous
approaches, the most notable of which is SNAP~\cite{TianHP08}. However, SNAP and
its successor, k-SNAP~\cite{ZhangTP10},are not suitable for AQP
and are mainly devoted to discovery-driven graph summarization of heterogeneous networks 
by relying on a notion of interestingness. These approaches pioneer 
user interaction as a mean to control the resolutions of the graph summary.
However, user intervention here is basically used to drill-down or roll-up to navigate through summaries with
different resolutions.
More recently, preliminary work by Rudolf et al.~\cite{Rudolf14} has
introduced a graph summary
suitable for property graphs based on a set of input summarization rules. In the spirit of SNAP, heir summary is 
conceived for the property graph cube and
supports OLAP operations of the kind roll-up, drill-down and 
slice dice for reducing/expanding the cube dimensions. They further tackle the
problems of unbalanced hierarchies and OLAP anomalies. They do not support
label-constrained reachability queries and VGAQP as we do in this paper.  
In \cite{TangCM16}, a graphical sketch for summarizing graph streams is introduced. 
While it supports dynamic graphs and provides constant maintenance time per update, the 
summary stores no extra information and is not used for AQP.

Visualization has recently been pinpointed as a fertile ground for data
management~\footnote{ACM Sigmod Blog hosts Aditya Parameswaran, 2018}, in the context of the too-many-tuples and
too-many-visualization problems. In our paper, we focus on the graph DBMS case, 
in which these problems are further exacerbated by the fact that interlinking structures 
must be preserved in the evaluation of graph navigational queries.
Optimistic visualization is introduced in the PanGloss
system~\cite{MoritzFD017}, in
order to let the user quickly retrieve AQP results, by only considering a few
relational tuples, and to keep him/her busy until new ones arrive.
Contrary to PanGloss, we do not consider the human-interaction aspect of
VAGQP, which deserves further attention in our future work. 
Graph summaries applied to answer subgraphs, returned by formulating keyword queries on 
large networks, are used in order to enhance the user's understandability of query
results in~\cite{WuYSIY13}. Our query classes are significantly different
from theirs. 

The IDEA system~\cite{GalakatosCZBK17} leverages other visualization
overlays, i.e., heat-maps, where each cell represents the relationship count
between two attributes in a relation. They focus on the rare subpopulation case and on
the construction of suitable indexes.

The AQP++ system~\cite{PengZWP18} blends AQP with aggregate precomputation, such as data cubes, to handle aggregate relational queries. 
This unified approach performs better, in terms of preprocessing, runtime, and quality, than plain AQP.

\section{Conclusion and Perspectives}\label{sec:concl}
This paper presents the GRASP graph summarization technique, suitable for
label-constrained counting reachability queries on property graphs. 
We prove that the problem of deciding whether an optimal graph summary exists, i.e.,
such that the number of label-constrained graph partitions is minimal, is
NP-complete. We then leverage our GRASP summaries and their linked treemap
encoding for AQP purposes. The experiments ran on various datasets show both
fairly high compression ratios, for the summaries, and low relative error and time gain, for the AQP. We also illustrate the advantages of using 
linked treemaps as summary encoding, for both human-explainability and AQP refinement. As future work, we aim to further explore the interplay
between visualization techniques and AQP and to support more graph analytics and aggregation operators. 
As our approach is system-agnostic, being applicable to any framework supporting the property graph model
and the query fragment in Fig.~\ref{fig:graphg}, we would like to implement
it on top of other graph query engines in the near future. Furthermore, we
plan to make open-source
the VAGQP prototype and its visualization plug-in. \\

\balance

\bibliographystyle{abbrv}
\bibliography{paper-bib} 

\end{document}